\title{Structural Parameters for Steiner Orientation}
\titlerunning{Structural Parameters for Steiner Orientation}
\author{Tesshu Hanaka}
{Kyushu University, Fukuoka, Japan
\and \url{https://sites.google.com/view/tesshu-hanaka/home}}
{hanaka@inf.kyushu-u.ac.jp}
{https://orcid.org/0000-0001-6943-856X}
{JSPS KAKENHI Grant Numbers
JP21K17707, 
JP22H00513, 
JP23H04388, 
JP25K03077, 
and JST, CRONOS, Japan Grant Number JPMJCS24K2.
}
\author{Michael Lampis}{Universit\'{e} Paris-Dauphine, PSL University, CNRS UMR7243, LAMSADE, Paris, France}{michail.lampis@dauphine.fr}{https://orcid.org/0000-0002-5791-0887}
{Partially supported by ANR project ANR-21-CE48-0022 (S-EX-AP-PE-AL).}
\author{Nikolaos Melissinos}{Computer Science Institute, Faculty of Mathematics and Physics, Charles University, Prague, Czech Republic}{melissinos@iuuk.mff.cuni.cz}{https://orcid.org/0000-0002-0864-9803}{Partially supported by Charles University projects UNCE 24/SCI/008 and PRIMUS 24/SCI/012, and by the project 25-17221S of GAČR.}
\author{Edouard Nemery}
{Universit\'{e} Paris-Dauphine, PSL University, CNRS UMR7243, LAMSADE, Paris, France}
{edouard.nemery@etu.u-paris.fr}
{https://orcid.org/0009-0007-6977-9330}{}
\author{Hirotaka Ono}
{Nagoya University, Nagoya, Japan}
{}
{JSPS KAKENHI Grant Numbers
JP22H00513, 
}{}{}
\author{Manolis Vasilakis}
{Universit\'{e} Paris-Dauphine, PSL University, CNRS UMR7243, LAMSADE, Paris, France}
{emmanouil.vasilakis@dauphine.eu}
{https://orcid.org/0000-0001-6505-2977}
{Partially supported by ANR project ANR-21-CE48-0022 (S-EX-AP-PE-AL).}
\authorrunning{J. Open Access and J.\,R. Public} 
\keywords{Steiner Orientation, Treewidth, ETH} 
\begin{document}

\maketitle

\begin{abstract}
We consider the \textsc{Steiner Orientation} problem, where we are given as input a mixed graph $G=(V,E,A)$ and a set of $k$ demand pairs $(s_i,t_i)$, $i\in[k]$. The goal is to orient the undirected edges of $G$ in a way that the resulting directed graph has a directed path from $s_i$ to $t_i$ for all $i\in[k]$. We adopt the point of view of structural parameterized complexity and investigate the complexity of \textsc{Steiner Orientation} for standard measures, such as treewidth.
Our results indicate that \textsc{Steiner Orientation} is a surprisingly hard problem from this point of view. In particular, our main contributions are the following:

\begin{enumerate}

\item We show that \textsc{Steiner Orientation} is NP-complete on instances where the underlying graph has feedback
vertex number 2, treewidth 2, pathwidth 3, and vertex integrity 6.

\item We present an XP algorithm parameterized by vertex cover number $\vc$ of complexity $n^{\bO(\vc^2)}$. Furthermore, we show that this running time is essentially optimal by proving that a running time of $n^{o(\vc^2)}$ would refute the ETH.

\item We consider parameterizations by the number of undirected or directed edges ($|E|$ or $|A|$) and we observe that the trivial $2^{|E|}n^{\bO(1)}$-time algorithm for the former parameter is optimal under the SETH. Complementing this, we show that the problem admits a $2^{\bO(|A|)}n^{\bO(1)}$-time algorithm.
    
\end{enumerate}

In addition to the above, we consider the complexity of \textsc{Steiner Orientation} parameterized by $\tw+k$ (FPT), distance to clique (FPT), and $\vc+k$ (FPT with a polynomial kernel).

\end{abstract}

\newpage

\section{Introduction}
The \textsc{Steiner Orientation} problem is an NP-hard graph optimization problem, which involves assigning directions to the undirected edges of a mixed graph—a graph containing both directed and undirected edges—to satisfy specific connectivity requirements, given by pairs of terminals. The objective is to orient the undirected edges in such a way that there exists a directed path between each given terminal pair. Formally, the problem is defined as follows. 

\problemdef{Steiner Orientation} {A mixed graph $G=(V,E,A)$ where $E$ and $A$ denote the set of edges and arcs in $G$ respectively. Additionally, we are given a set of terminal pairs $\terminals = \setdef{(s_i,t_i) \in V \times V}{i \in [k]}$ such that $s_i \neq t_i$, for all $i \in [k]$.} {Determine whether there exists an orientation of $E$ such that $k$ directed paths $P_i = s_i \to \ldots \to t_i$, $i \in [k]$ exist.} 

The \SO{} problem has been well-studied in bioinformatics, motivated by modeling protein–protein or protein–DNA interactions~\cite{MedvedovskyBZS08,jcb/SilverbushES11,DornHKNU11,netmahib/Roayaei19}. Another motivation naturally arises from designing transportation networks. For example, in an urban transportation network with a mix of one-way streets (directed arcs) and two-way streets (undirected edges), one can consider the traffic control problem of deciding the direction of some streets to ensure routes from specific origins to destinations. This scenario is naturally modeled as the \SO{} problem. 

Regarding its complexity, Arkin and Hassin show that \SO{} is NP-complete in general, but polynomially solvable if $k=2$~\cite{ArkinH02}. From the perspective of parameterized complexity, Cygan et al.~\cite{CyganKN13} proposed an $n^{\bO(k)}$-time algorithm for the number of terminal pairs $k$ as a parameter, which shows that the problem belongs to the class XP. However, subsequent research has revealed the problem's intractability for this parameterization; \SO{} is shown to be W[1]-hard when parameterized by $k$ and cannot be solved in time $f(k)\cdot n^{o(k/ \log k)}$ under ETH \cite{PilipczukW18a}. This lower bound was later improved to a $f(k)\cdot n^{o(k)}$ ETH-based lower bound even on planar graphs \cite{ChitnisFS19}. Surprisingly, W{\l}odarczyk recently proved that the problem is not just W[1]-hard, but W[1]-complete~\cite{icalp/Wlodarczyk20}. This series of studies has almost completely characterized the complexity with respect to the parameter $k$. Despite this literature, to the best of our knowledge, the parameterized complexity of \SO{} is only studied for the number $k$ of terminal pairs. The complexity with respect to graph parameters such as treewidth or vertex cover number has been almost entirely unstudied. This paper aims to provide the first systematic study in this direction.

\subsection{Our Contribution} 

In this paper, we first show the para-NP-completeness of \textsc{Steiner Orientation} for the parameters pathwidth, feedback vertex number, and vertex integrity.\footnote{In this paper, graph parameters refer to those of the underlying graph of $G$, that is, the graph obtained by replacing all arcs with undirected edges. We assume that the reader is familiar with standard parameters, but recall some of the relevant definitions and relations further below.} Specifically, the problem is NP-complete on series-parallel graphs (i.e., graphs of treewidth at most 2) of vertex-deletion distance  2 to a forest of stars of size at most 4. By slightly modifying our reduction, we further obtain NP-hardness for grid graphs (which are planar and have maximum degree $4$) with bounded pathwidth. Notice that, in a sense, these results clearly delineate the frontier of polynomial-time versus NP-complete cases for most standard graph parameters: On the one hand, when we place no bound on the degree, \SO\ is already NP-hard on graphs which are 2 vertices away from an extremely restricted class. On the other hand, for bounded-degree graphs no such result can be obtained, as any graph that has bounded tree-depth and bounded degree actually has bounded size, so the hardness for bounded pathwidth graphs is essentially best possible.

The results above motivate us to consider more restrictive parameters. In this direction, we present an XP algorithm parameterized by vertex cover number, $\vc$, that runs in time $n^{\bO(\vc^2)}$. Even though this improves the situation compared to the paraNP-completeness for treewidth, this complexity is rather disappointing, raising the natural question whether the square in the exponent is necessary. Our main contribution for this parameter is to answer this question by proving that an $n^{o(\vc^2)}$ time algorithm would refute the ETH.
Along the way,  we also show that \textsc{Steiner Orientation} is fixed-parameter tractable when parameterized by distance to a clique, which can be seen as the dense analogue of vertex cover (a vertex cover of a graph is a clique modulator of its complement). 

Moving in another direction, we consider the number of undirected or directed edges as parameters. Here, we show that \textsc{Steiner Orientation} is fixed-parameter tractable (FPT) when parameterized by the number $|A|$ of directed edges. We observe that if $|A|=0$, the problem is solvable in polynomial time as a consequence of Robbins' theorem~\cite{HASSIN1989}. Thus, our result generalizes this classical polynomial-time case. Here, it is natural to consider the number $|E|$ of undirected edges. Since we have only two directions for each undirected edge, \textsc{Steiner Orientation} is clearly solvable in $2^{|E|}n^{\bO(1)}$ time. Indeed, this running time is essentially tight, as a careful observation of the standard NP-hardness reduction for \textsc{Steiner Orientation}~\cite{ArkinH02} yields a $(2-\varepsilon)^{|E|} n^{\bO(1)}$ lower bound under the SETH. 

Finally, we investigate the parameterized complexity with respect to the combined parameters $k + \tw$ and $k + \vc$. We show that \textsc{Steiner Orientation} is FPT when parameterized by $k + \tw$ by formulating it as an MSO$_2$ problem, which stands in contrast to the NP-hardness on series-parallel graphs. Furthermore, we give a polynomial kernel when parameterized by $k + \vc$. 




\subparagraph*{Overview of Techniques.}  Our NP-completeness proof for \SO\ on graphs of treewidth $2$ is based on a direct reduction from a variant of \textsc{3-SAT}, where we naturally use undirected edges to represent variables of the initial formula, with directions representing truth assignments (this is standard). The key new insight is the following: we can transmit information between copies of such edges, ensuring they must be oriented in a consistent way, by adding demand pairs which must be routed through \emph{two} main hub vertices. By using appropriately oriented arcs to connect our gadgets to the two hubs we ensure that gadgets don't interfere with each other. In a sense, the take-home message of this construction is that the structure of $G$ alone is not enough to measure the complexity of an instance, because the way that demand pairs interact with the graph can significantly complicate the structure of the instance.

The XP algorithm parameterized by vertex cover is based on a simple branching procedure: for each pair $u,v$ of vertices of the vertex cover, guess if there is a directed path of length at most $2$ from $u$ to $v$ in the solution and, if yes, which is (potentially) the vertex $w$ in the middle of the path. This clearly leads to at most $n^{\bO(\vc^2)}$ guesses, and once we have fixed these decisions it is not hard to complete the solution because we can infer exactly which vertices of the cover have directed paths to each other.  

Given the simplicity of the algorithm sketched above for vertex cover, it is somewhat surprising that this is essentially optimal (under the ETH). We establish this by giving a reduction from $k$-\textsc{Clique} producing a \SO\ instance with vertex cover $\bO(\sqrt{k})$. The key intuition is again that we can use a small number ($\bO(\sqrt{k})$) of hubs, through which information will be routed. In particular, we set up a complete bipartite graph $K_{\sqrt{k},\sqrt{k}}$ and for each edge of this graph we set up a gadget encoding a selection in the original instance: the number of edges of this graph is sufficient to encode all selections, but the vertex cover of the construction is determined by the number of vertices of the complete bipartite graph. 

To obtain an algorithm for parameter $|A|$ we face significantly more obstacles than for parameter $|E|$ (which is trivially FPT). Our approach relies on exhaustive applications of some (standard) reduction rules for this problem, which eliminate all cycles and degree $1$ vertices. We then focus on a restricted special case of \SO, where all undirected components are paths, all internal vertices of the paths are incident on no arcs, and the endpoints of each path are incident on at most one arc. We solve this restricted case using a reduction to \textsc{2-SAT}. We then show, using arguments which essentially take into account that $A$ is a feedback edge set of the underlying graph, that a simple branching algorithm can reduce any instance into $2^{\bO(|A|)}$ instances of the (polynomial-time solvable) restricted case.

The FPT algorithm for distance to clique relies again on exhaustively applying reduction rules that eliminate cycles and then making some simple structural observations: there are few edges incident on the modulator vertices (so we can guess their orientation); and the edges contained in the clique must have a very simple structure (they form a matching). The FPT algorithm for $\tw+k$ relies on a formulation of the problem in MSO$_2$ logic and Courcelle's theorem -- we leave it as an interesting open problem to determine the best dependence for this parameterization. Finally, for the more restrictive parameter $\vc+k$ we obtain a polynomial kernel via a maximum matching argument. In particular, we show that by calculating a maximum matching between pairs of vertices of the vertex cover and non-terminals in the indpendent set, we can identify a set of $\bO(\vc^2)$ vertices of the independent set which are sufficient to obtain a solution and delete the rest, giving a kernel of order $\bO(\vc^2+k)$.

\subsection{Related work}
The \textsc{Steiner Orientation} problem has been studied extensively.
From a bioinformatics perspective, an optimization version of \textsc{Steiner Orientation} is widely studied~\cite{MedvedovskyBZS08,jcb/SilverbushES11,DornHKNU11,netmahib/Roayaei19}. In this optimization version, called \textsc{Maximum Steiner Orientation}, the goal is to find an orientation that maximizes the number of satisfied terminals pairs. 

Unfortunately, \textsc{Maximum Steiner Orientation} is significantly more difficult than \textsc{Steiner Orientation}. Medvedovsky et al.~\cite{MedvedovskyBZS08} show that the problem is inapproximable within a factor of 12/13 even on undirected stars and binary trees. Elberfeld et al. then observe that the NP-hardness reduction for \textsc{Steiner Orientation} in \cite{ArkinH02} implies the inapproximability within a factor of 7/8~\cite{tcs/ElberfeldSDSS13}. More recently, Hörsch \cite{Horsch25} considered the maximization version of \SO\ but under the restriction that the set of demand pairs includes all possible demands and showed that this case is also APX-hard. Interestingly, for this version of the problem Hörsch supplies an XP algorithm for parameter $|A|$ and leaves it as an open question whether an FPT algorithm can be obtained. In this paper we provide an FPT algorithm for \SO\ parameterized by $|A|$, but this is somewhat orthogonal to Hörsch's question, as we are not dealing with the maximization version (which makes our problem easier) but we are also not assuming that all pairs of vertices have a demand (which makes our problem harder).

For the approximability, an $\bO(\log n)$-approximation is presented for undirected graphs~\cite{MedvedovskyBZS08}. This has been improved to a factor $\bO(\log n/\log \log n)$~\cite{GamzuM16}, and later to $\bO(\log k/\log \log k)$~\cite{CyganKN13}. 
For bounded feedback vertex number graphs and bounded treewidth graphs, 
an $\bO(\log n)$-approximation algorithm and an $\bO(\log^2 n)$ approximation algorithm are presented~\cite{tcs/ElberfeldSDSS13}.

From the viewpoint of parameterized complexity, Dorn et al.~\cite{DornHKNU11} and  Roayaei~\cite{netmahib/Roayaei19} propose several FPT algorithms for parameters related to terminal pairs.

Regarding the parameterized approximation, W{\l}odarczyk~\cite{icalp/Wlodarczyk20} shows that there is no $(\log k)^{o(1)}$-approximation algorithm for \textsc{Maximum Steiner Orientation}  that runs in FPT time with respect to $k$ assuming FPT $\neq$ W[1].

\section{Preliminaries}\label{sec:prelim}
We use the standard notations in graph theory.
Let $G=(V,E,A)$ be a mixed graph where $V$ is a set of vertices, $E$ is a set of undirected edges, and $A$ is a set of directed edges. We denote by $\{u,v\}\in E$ an undirected edge between $u$ and $v$ and by $(u,v)\in A$ a directed edge from $u$ to $v$.

A mixed graph $G$ is a \emph{mixed acyclic graph} if it has no cycle. Notice that in the context of mixed graphs a cycle is any sequence of vertices $v_1,v_2,\ldots,v_p$ with $v_1=v_p$ such that between any two consecutive vertices $v_i,v_{i+1}$ the graph contains either the edge $\{v_i, v_{i+1}\}$ or the arc $(v_i,v_{i+1})$. 
By definition, the subgraph induced by undirected edges in a mixed acyclic graph is a forest. Moreover, the graph obtained from a mixed acyclic graph $G$ by contracting all the undirected edges is a directed acyclic graph.
For a vertex subset $V'\subseteq V$, we denote by $G[V'] = (V',E(V'),A(V'))$ the subgraph induced by $V'$,  where $E(V')\subseteq E$ and $A(V')\subseteq A$. A mixed path of a mixed graph is a path using both edges and arcs respecting the orientation of the arcs.
For a mixed graph $G = (V, E, A)$, $\lambda(E)$ denotes an orientation of $E$ and $G_\lambda$ denotes the directed graph obtained by the orientation $\lambda(E)$.

\subparagraph*{Graph parameters.} Throughout the paper we will use several structural graph parameters which will in general refer to the underlying graph of a given mixed graph $G$. Recall that the underlying graph is the graph obtained by replacing each arc by an edge with the same endpoints. The parameters we will mention are treewidth ($\tw$), pathwidth ($\pw$), vertex integrity ($\vi$), feedback vertex set ($\fvs$), vertex cover ($\vc$), and distance to clique ($\dtc$). For the definitions of treewidth and pathwidth we refer the reader to \cite{books/CyganFKLMPPS15}; $\fvs$, $\vc$, and $\dtc$ denote the size of the smallest set of vertices whose removal leaves the graph a forest, an independent set, or a clique respectively; while a graph has vertex integrity at most $k$ if there exists a set of vertices $S$ whose removal results in a graph where all components have size at most $k-|S|$. It is known that four of these parameters form a hierarchy, in the sense that $\tw \le \pw \le \vi \le \vc+1$ for all graphs. We note that all these parameters are closed under vertex deletion and edge contraction, meaning that these operations can only decrease the parameter value of a graph. 

\subsection{Preprocessing}

We present two basic polynomial-time reduction rules for \SO, which eliminate cycles and degree-$1$-vertices respectively. Applying these rules will never increase the values of our parameters, as the rules use edge contractions and vertex deletions, so in the rest of the paper we will mostly focus on instances where these rules have been exhaustively applied.

\begin{proposition}[\cite{jcb/SilverbushES11}]\label{prop:prepoc}
    Let $(G,\terminals)$ be an instance of \SO.
    Let $G'$ be a mixed graph obtained from $G$ by contracting all the cycles and $\terminals'$
    be the corresponding set of terminal pairs.
    Then $(G,\terminals)$ is a yes-instance if and only if $(G',\terminals')$ is a yes-instance.
    Moreover, $G'$ is a mixed acyclic graph and can be computed in polynomial time.
\end{proposition}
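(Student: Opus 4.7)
The plan is to prove the equivalence for a single cycle contraction and then iterate. Let $C = v_1, v_2, \ldots, v_p = v_1$ be a cycle of $G$ and let $\hat G$ be obtained by identifying the vertices of $C$ into a single new vertex $v^\star$ (deleting the resulting self-loops), while $\hat \terminals$ is obtained from $\terminals$ by replacing every terminal in $C$ by $v^\star$ and discarding any pair whose two endpoints both collapse to $v^\star$. Since each such contraction strictly decreases $|V|$, iterating until no cycle remains terminates in $\bO(|V|)$ rounds and yields the graph $G'$ and terminal set $\terminals'$ of the statement. A mixed cycle can be detected in polynomial time, for example by computing the strongly connected components of the auxiliary directed graph obtained by replacing each undirected edge by a pair of anti-parallel arcs and dismissing the spurious $2$-cycles formed by single undirected edges.

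For the forward direction, suppose $\lambda$ is a valid orientation of $(G, \terminals)$ and let $\hat \lambda$ be its restriction to the undirected edges of $\hat G$ (namely those not entirely contained in $C$). Every directed path from $s_i$ to $t_i$ in $G_\lambda$ projects through the contraction to a directed walk from $\hat s_i$ to $\hat t_i$ in $\hat G_{\hat \lambda}$, from which a directed path can be extracted; demand pairs whose endpoints both lie in $C$ have been removed from $\hat \terminals$ altogether and thus need not be considered.

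For the backward direction, suppose $\hat \lambda$ is a valid orientation of $(\hat G, \hat \terminals)$. Extend it to $\lambda$ on $G$ by orienting every undirected edge of $C$ in the direction $v_i \to v_{i+1}$ (and orienting any undirected chords of $C$ arbitrarily). This is consistent with the arcs on $C$, which by the definition of a mixed cycle are all of the form $(v_i, v_{i+1})$, so $C$ forms a directed cycle in $G_\lambda$ and every vertex of $C$ can reach every other. For a demand pair $(s_i, t_i) \in \terminals$ with both endpoints in $C$, the directed cycle directly supplies the required $s_i \to t_i$ path; for any other pair, a directed path from $\hat s_i$ to $\hat t_i$ in $\hat G_{\hat \lambda}$ lifts to a directed walk in $G_\lambda$ by using segments of the directed cycle to bridge the corresponding entry and exit vertices of $C$ (including the starting or ending vertex, in case $s_i$ or $t_i$ lies in $C$). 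The main subtlety is to verify that this lifting step is always well defined, which follows immediately from the strong connectivity that the directed cycle provides among the vertices of $C$.
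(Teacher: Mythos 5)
The paper attributes this proposition to \cite{jcb/SilverbushES11} and supplies no proof of its own, so there is no internal argument to compare against. Your proof is correct and is the standard one: contract one mixed cycle $C$ at a time, prove the equivalence for a single contraction (project directed paths to walks for the forward direction; lift paths via the directed cycle, which makes all of $C$ strongly connected, for the converse), and iterate, noting that each step strictly decreases $|V|$. The only point worth making explicit is that when identifying $C$ into $v^\star$ you should retain any resulting parallel edges between $v^\star$ and vertices outside $C$ rather than merging them, so that the forward-direction restriction of $\lambda$ to $\hat G$ is well defined (two original edges from distinct cycle vertices to the same outside vertex may legitimately be oriented in opposite directions by $\lambda$); any such parallel pair is itself a $2$-cycle that disappears in a subsequent iteration, so the final $G'$ is a simple mixed acyclic graph as required.
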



\begin{propositionrep}\label{prop:remove:degree:1}
    Let $(G,\terminals)$ be an instance of \SO.
    If $G$ has a vertex $v\in V$ of degree $1$ in the underlying graph, then we can either correctly conclude that $(G,\terminals)$ is a no-instance or construct a new instance $(G',\terminals')$ such that $G'=G[V\setminus \{v\}]$ and $(G,\terminals)$ is a yes-instance if and only if $(G',\terminals')$ is a yes-instance.
\end{propositionrep}

\begin{proof}
First, we assume that there are no terminal pairs $(s,t) \in \terminals$ such that $s = t$. Indeed, any such pair is always connected, so we can safely remove it from $\terminals$.

Let $v \in V$ be a vertex of degree $1$. If there is no terminal pair $(s,t) \in \terminals$ such that $s = v$ or $t = v$, then we can safely delete $v$ since it will never be used in any $s$-$t$ path.

Now suppose there exists a terminal pair $(s,t) \in \terminals$ where either $v = s$ or $v = t$, and let $u \in V$ be the unique neighbor of $v$ in $G$.

If $v$ appears as both a source terminal and a sink terminal, then we have a No-instance (as $v$ has degree one). Therefore, $v$ appears only as a source or only as a sink. If the edge connecting $u$ to $v$ is directed and has an incorrect orientation (that is, $v$ is a source but the arc is going into $v$, or $v$ is a sink but the arc is coming out of $v$), then we have a No-instance. 

If we have not concluded that we have a No-instance we can always go from $v$ to $u$ if $v$ is a source terminal and from $u$ to $v$ if $v$ is a sink terminal. Since, $v$ is of degree one we also have that:
\begin{itemize}
    \item If $v = s$, then any $v$-$t$ path must have the form $v$-$u$-$\ldots$-$t$.
    \item If $v = t$, then any $s$-$v$ path must have the form $s$-$\ldots$-$u$-$v$.
\end{itemize}
Therefore, we can guarantee the existence of a $v$-$t$ path by providing a $u$-$t$ path, or guarantee the existence of an $s$-$v$ path by providing an $s$-$u$ path. Since $v$ has degree one, the reverse is also true. By replacing all appearances of $v$ with $u$ in any terminal pairs, we create an equivalent instance $(G,\terminals')$ of \SO. 

Finally, since in $(G,\terminals')$ the vertex $v$ does not appear in any terminal pairs and has degree one, it can be removed. Thus, $(G-\{v\},\terminals')$ is a yes-instance of \SO if and only if $(G,\terminals)$ is a yes-instance of \SO.
\end{proof}


\section{NP-hardness}\label{sec:np_hardness}

Our main result in this section is to show that {\SO} remains NP-hard even on very restricted families of graphs. We focus on restrictions on the structure of the underlying undirected graph and show that even if the underlying graph is series-parallel and has vertex-deletion distance $2$ to a forest of stars of size at most $4$ the problem is still NP-complete.
Note that all series-parallel graphs have treewidth at most $2$~\cite[Theorem~41]{tcs/Bodlaender98} and are planar~\cite[Chapter~11.2]{books/BrandstadtLS99},
while the graph constructed by our reduction has feedback vertex number~$2$, pathwidth~$3$, and vertex integrity~$6$.
Since {\SO} is polynomial-time solvable when the underlying graph of the input is a tree (as there exists a unique path between every pair of terminals),
our result establishes a sharp dichotomy on the polynomial-time solvability of {\SO} with respect to the treewidth of the input graph.
Moreover, by slightly modifying our reduction we further obtain NP-hardness for grid graphs of bounded pathwidth.

\begin{theorem}\label{thm:np_hardness:main}
    {\SO} is NP-complete on series-parallel graphs of vertex-deletion distance $2$ to a forest of stars of size at most $4$.
\end{theorem}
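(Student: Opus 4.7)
The plan is to reduce from (a suitable variant of) \textsc{3-SAT}. Given an instance with variables $x_1,\ldots,x_n$ and clauses $C_1,\ldots,C_m$, I would build a mixed graph $G$ with two distinguished \emph{hub} vertices $h_1,h_2$ whose removal leaves a disjoint union of stars, each on at most four vertices. Every literal occurrence in the formula is encoded by a small \emph{occurrence gadget}---a star connected to $h_1$ and $h_2$ via a few arcs---containing one undirected edge whose orientation records the truth value assigned to the corresponding literal. This matches the structural target of the theorem: $\{h_1,h_2\}$ will serve simultaneously as a size-$2$ modulator to a forest of stars of size at most $4$ and as the feedback vertex set of the underlying graph.

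The bulk of the technical work lies in designing the gadgets and demand pairs so that three properties hold. \emph{Channeling:} the arcs joining each gadget to the hubs are oriented so that every sufficiently long directed path in the resulting orientation must traverse $h_1$ and $h_2$ in a prescribed order, preventing unintended shortcuts between unrelated gadgets. \emph{Variable consistency:} for each variable $x_i$, demand pairs are inserted between successive occurrence gadgets of $x_i$, so that the only routable realisation forces all copies of $x_i$ to be oriented identically, yielding a well-defined truth assignment. \emph{Clause satisfaction:} for each clause $C_j$, a single demand pair $(s_j,t_j)$ is placed so that the unique candidate routes pass sequentially, via the hubs, through the occurrence gadgets of $C_j$'s literals, and such a route exists iff at least one of those undirected edges is oriented in the direction associated with a satisfying literal.

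Correctness is then argued in the standard two directions: a satisfying assignment induces an orientation that routes every demand pair, and conversely any valid orientation can be decoded into a consistent satisfying assignment. Membership in NP is immediate. For the parameter bounds, $\{h_1,h_2\}$ is by construction a feedback vertex set (since $G - \{h_1,h_2\}$ is a forest) as well as a distance-$2$ modulator to a forest of stars of size at most $4$, which immediately gives $\fvs(G)\le 2$, $\vi(G)\le 6$, and $\pw(G)\le 3$; series-parallel (equivalently, $\tw\le 2$) will follow by exhibiting an explicit tree decomposition whose bags are essentially $\{h_1,h_2,x\}$ for $x\in V\setminus\{h_1,h_2\}$, with small adjustments to cover the intra-star edges. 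The main obstacle I expect is precisely the \emph{channeling} property: ruling out every parasitic $s_j \to t_j$ path that might bypass the intended occurrence gadgets by exploiting undirected edges of unrelated stars together with the hub arcs. Enforcing this uniformly across all demand pairs constrains the arc orientations at the hubs in a rather rigid way, and I anticipate that most of the reduction's design effort and its correctness proof will revolve around ensuring a globally consistent arc pattern that simultaneously accommodates every variable-consistency and clause-satisfaction demand.
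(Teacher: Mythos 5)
Your high-level architecture matches the paper's: two hub vertices ($\ell,r$ in the paper, $h_1,h_2$ for you) as a size-$2$ modulator to a forest of small stars, undirected edges inside the stars encoding truth values, and arcs aimed to and from the hubs so that paths must flow through them. However, your plan leaves the two mechanisms you yourself flag as delicate essentially open, and the specific designs you sketch for them are not the ones that actually work. First, \emph{variable consistency}: you propose chaining successive occurrence gadgets of $x_i$ with demand pairs. The paper instead keeps one ``master'' star per variable (a $K_2$ on $\{\ell_i,r_i\}$ with arcs $\ell_i\to\ell$, $r_i\to r$) and one star per clause (a $K_{1,3}$ on $t_j$ and $v^j_{i_1},v^j_{i_2},v^j_{i_3}$), and synchronizes each clause-side edge $\{v^j_i,t_j\}$ to the master edge $\{\ell_i,r_i\}$ via a single demand $(r_i,v^j_i)$: this demand has exactly two candidate routes, one through the variable edge and one through the clause edge, so the two edges are forced to agree. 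This is both simpler and avoids the bookkeeping (and the potential extra stars/degree) that occurrence chaining would require; it also means you do \emph{not} need one star per literal occurrence. Second, \emph{clause satisfaction}: your description of routing a demand ``sequentially through the occurrence gadgets'' would encode a conjunction; the disjunctive (OR) semantics in the paper comes from having all three undirected edges $\{v^j_i,t_j\}$ point to the same vertex $t_j$ and the demand $(\ell,t_j)$ be satisfiable by any one $\ell\to v^j_i\to t_j$.

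Finally, the choice of source problem matters: you say ``a suitable variant of 3-SAT,'' but the paper specifically uses \textsc{Monotone 3-SAT}. This is what resolves your ``channeling'' worry without further gadgetry: positive clauses are hung off hub $\ell$ and negative clauses off hub $r$ (symmetrically), and all arcs go variable-side $\to$ hub $\to$ clause-side, so no parasitic path between unrelated gadgets can exist by orientation alone, with no arc between $\ell$ and $r$ needed. With a non-monotone clause you would need a $v^j_i$ reachable from both $\ell$ and $r$, which breaks this clean one-directional flow. So the core gap is that your plan recognizes where the difficulty lies but does not supply the gadget design that makes the channeling, consistency, and disjunction work simultaneously while staying inside the ``forest of stars of size at most $4$'' budget.
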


\begin{proof}
    It is easy to see that {\SO} belongs to NP, thus in the rest of the proof we argue about its NP-hardness.
    The starting point of our reduction is the \textsc{Monotone 3-SAT} problem,
    which is the variation of 3-SAT where every clause contains at most 3 literals and is \emph{monotone}, that is,
    it contains only unnegated or only negated variables.
    This variation is well-known to be NP-hard~\cite{dam/DarmannD21,iandc/Gold78}.
    
    Let $\phi$ be an instance of \textsc{Monotone 3-SAT}, where $X = \{ x_1, \ldots, x_n \}$ denotes its variables and
    $C = \{ c_1, \ldots, c_m \}$ its clauses.
    We say that a clause is \emph{positive} if it consists of only unnegated variables,
    and \emph{negative} if it consists of only negated variables.
    
    We will construct an instance $(G, \terminals)$ of {\SO} that is equivalent to~$\phi$.
    For each variable $x_i$, introduce vertices $\ell_i,r_i$ that are connected via an undirected edge.
    We further introduce vertices $\ell$ and $r$, as well as the arcs $\ell_i \to \ell$ and $r_i \to r$, for all $i \in [n]$.
    Now let $c_j$ be a positive clause containing 3 variables, that is,
    $c_j = x_{i_1} \lor x_{i_2} \lor x_{i_3}$ for distinct $i_1,i_2,i_3 \in [n]$.
    We construct the corresponding clause gadget as depicted in \cref{fig:np_hardness}.
    In particular, we introduce vertices $v^j_{i_1}, v^j_{i_2}, v^j_{i_3}$, all of which have an incoming arc from $\ell$,
    as well as a vertex $t_j$ that is connected via undirected edges with $v^j_{i_1}, v^j_{i_2}, v^j_{i_3}$ and has an incoming arc from $r$.
    We further introduce the terminal pairs $(\ell,t_j), (r_{i_1},v^j_{i_1}), (r_{i_2},v^j_{i_2}), (r_{i_3},v^j_{i_3})$ into $\terminals$.
    We treat positive clauses that contain 2 variables in an analogous way, 
    while the construction for the negative clauses is entirely symmetric.

    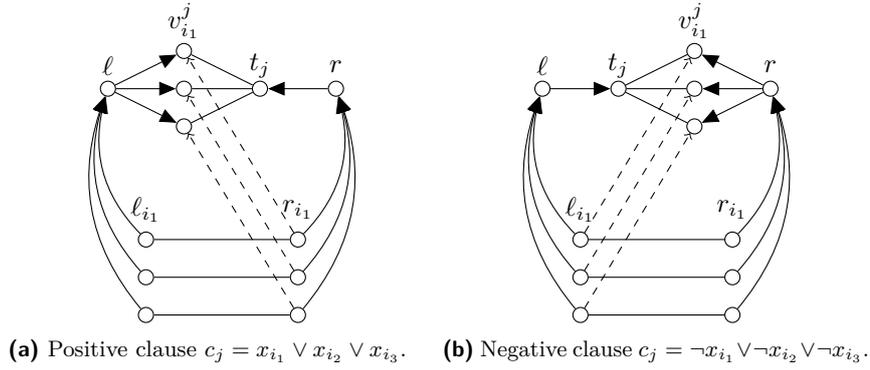
\begin{figure}[ht]
        \centering
            \begin{subfigure}[b]{0.4\linewidth}
            \centering
            \begin{tikzpicture}[transform shape]
        
            \node[vertex] (a) at (5,5) {};
            \node[] () at (5,5.3) {$\ell$};
            
            \node[vertex] (x1) at (6,4.5) {};
            \node[vertex] (x2) at (6,5) {};
            \node[vertex] (x3) at (6,5.5) {};
            \node[] () at (6,5.9) {$v^j_{i_1}$};

            \node[vertex] (t) at (7,5) {};
            \node[] () at (7,5.3) {$t_j$};
            
            \node[vertex] (b) at (8,5) {};
            \node[] () at (8,5.3) {$r$};
            
            \node[vertex] (x11) at (5.5,3) {};
            \node[] () at (5.5,3.4) {$\ell_{i_1}$};
            \node[vertex] (x12) at (7.5,3) {};
            \node[] () at (7.5,3.4) {$r_{i_1}$};
            
            \node[vertex] (x21) at (5.5,2.5) {};
            \node[vertex] (x22) at (7.5,2.5) {};
            \node[vertex] (x31) at (5.5,2) {};
            \node[vertex] (x32) at (7.5,2) {};
            
            \draw[] (x11)--(x12);
            \draw[] (x21)--(x22);
            \draw[] (x31)--(x32);
            
            \draw[-triangle 45] (x11) edge [bend left] (a);
            \draw[-triangle 45] (x21) edge [bend left] (a);
            \draw[-triangle 45] (x31) edge [bend left] (a);
            
            \draw[-triangle 45] (x12) edge [bend right] (b);
            \draw[-triangle 45] (x22) edge [bend right] (b);
            \draw[-triangle 45] (x32) edge [bend right] (b);

            \draw[-triangle 45] (a)--(x1);
            \draw[-triangle 45] (a)--(x2);
            \draw[-triangle 45] (a)--(x3);
            
            \draw[] (x1)--(t);
            \draw[] (x2)--(t);
            \draw[] (x3)--(t);
            
            \draw[-triangle 45] (b)--(t);
    
            \draw[dashed, ->] (x12)--(x3);
            \draw[dashed, ->] (x22)--(x2);
            \draw[dashed, ->] (x32)--(x1);
    
            \end{tikzpicture}
            \caption{Positive clause $c_j = x_{i_1} \lor x_{i_2} \lor x_{i_3}$.}
        \end{subfigure}
        \begin{subfigure}[b]{0.4\linewidth}
        \centering
            \begin{tikzpicture}[transform shape]
        
            \node[vertex] (a) at (5,5) {};
            \node[] () at (5,5.3) {$\ell$};

            \node[vertex] (t) at (6,5) {};
            \node[] () at (6,5.3) {$t_j$};
            
            \node[vertex] (x1) at (7,4.5) {};
            \node[vertex] (x2) at (7,5) {};
            \node[vertex] (x3) at (7,5.5) {};
            \node[] () at (7,5.9) {$v^j_{i_1}$};
            
            \node[vertex] (b) at (8,5) {};
            \node[] () at (8,5.3) {$r$};
            
            \node[vertex] (x11) at (5.5,3) {};
            \node[] () at (5.5,3.4) {$\ell_{i_1}$};
            \node[vertex] (x12) at (7.5,3) {};
            \node[] () at (7.5,3.4) {$r_{i_1}$};
            
            \node[vertex] (x21) at (5.5,2.5) {};
            \node[vertex] (x22) at (7.5,2.5) {};
            \node[vertex] (x31) at (5.5,2) {};
            \node[vertex] (x32) at (7.5,2) {};
            
            \draw[] (x11)--(x12);
            \draw[] (x21)--(x22);
            \draw[] (x31)--(x32);
            
            \draw[-triangle 45] (x11) edge [bend left] (a);
            \draw[-triangle 45] (x21) edge [bend left] (a);
            \draw[-triangle 45] (x31) edge [bend left] (a);
            
            \draw[-triangle 45] (x12) edge [bend right] (b);
            \draw[-triangle 45] (x22) edge [bend right] (b);
            \draw[-triangle 45] (x32) edge [bend right] (b);

            \draw[-triangle 45] (b)--(x1);
            \draw[-triangle 45] (b)--(x2);
            \draw[-triangle 45] (b)--(x3);
            
            \draw[] (x1)--(t);
            \draw[] (x2)--(t);
            \draw[] (x3)--(t);
            
            \draw[-triangle 45] (a)--(t);
    
            \draw[dashed, ->] (x11)--(x3);
            \draw[dashed, ->] (x21)--(x2);
            \draw[dashed, ->] (x31)--(x1);
    
            \end{tikzpicture}
            \caption{Negative clause $c_j = \neg x_{i_1} \lor \neg x_{i_2} \lor \neg x_{i_3}$.}
        \end{subfigure}
    \caption{The terminal pairs added during the construction of the clause gadget are denoted by dashed arcs,
    apart from $(\ell, t_j)$ and $(r, t_j)$ respectively for the sake of clarity.}
    \label{fig:np_hardness}
    \end{figure}    

    This completes the construction of the instance $(G,\terminals)$ of \SO.
    It is easy to see that $G$ is series-parallel,
    while $G - \{\ell,r\}$ is a forest whose connected components are either a $K_2$, a $K_{1,2}$, or a $K_{1,3}$,
    thus indeed it holds that $\fvs(G) = 2$, $\pw(G) = 3$, and $\vi(G) = 6$.
    It remains to argue that the instance is equivalent to $\phi$.

    For the forward direction, let $\alpha \colon X \to \{ \true, \false \}$ be a satisfying assignment for $\phi$.
    Consider the following orientation for the undirected edges of $G$.
    If $\alpha(x_i) = \true$ then we orient the edge $\{ \ell_i , r_i \}$ as $r_i \to \ell_i$,
    otherwise if $\alpha(x_i) = \false$ then we orient it as $\ell_i \to r_i$.
    As for the clauses $c_j$ that contain $x_i$ and the edge $\{ v^j_i, t_j \}$,
    if $c_j$ is a positive clause (resp.~negative) then
    if $\alpha(x_i) = \true$ we orient it as $v^j_i \to t_j$ (resp.~$t_j \to v^j_i$),
    otherwise if $\alpha(x_i) = \false$ we orient it as $t_j \to v^j_i$ (resp.~$v^j_i \to t_j$).
    It suffices to argue that this orientation satisfies all terminal pairs in $\terminals$.

    Let $c_j = x_{i_1} \lor x_{i_2} \lor x_{i_3}$ be a positive clause involving three variables.
    We will argue that the described orientation satisfies all terminal pairs introduced due to the clause gadget of $c_j$.
    Since $\alpha$ is a satisfying assignment for $\phi$, it holds that there exists $i \in \{ i_1,i_2,i_3\}$ such that $\alpha(x_i) = \true$.
    Consequently, the arc $v^j_i \to t_j$ exists in our orientation, thus the terminal pair $(\ell,t_j)$ is satisfied by the directed path
    $\ell \to v^j_i \to t_j$.
    As for the terminal pair $(r_{i'},v^j_{i'})$, where $i' \in \{ i_1,i_2,i_3\}$,
    notice that if $\alpha(x_{i'}) = \true$ then it is satisfied by the directed path $r_{i'} \to \ell_{i'} \to \ell \to v^j_{i'}$,
    while if $\alpha(x_{i'}) = \false$ then it is satisfied by the directed path $r_{i'} \to r \to t_j \to v^j_{i'}$.
    The argumentation is similar for positive clauses involving two variables, and symmetric for negative clauses.

    For the converse direction, assume there exists an orientation of $G$ such that all terminal pairs in $\terminals$ are satisfied.
    Consider the assignment $\alpha \colon X \to \{ \true, \false \}$ where $\alpha(x_i) = \true$ if and only if the edge $\{ \ell_i, r_i \}$
    has been oriented as $r_i \to \ell_i$.
    We will prove by contradiction that this is a satisfying assignment for $\phi$.
    Let $c_j = x_{i_1} \lor x_{i_2} \lor x_{i_3}$ be a positive clause involving three variables,
    and assume that $\alpha(x_{i}) = \false$ for all $i \in \{ i_1,i_2,i_3 \}$.
    In that case, for the terminal pair $(r_i, v^j_i)$ to be satisfied, it follows that the edge $\{ v^j_i, t_j \}$ has been directed as $t_j \to v^j_i$.
    Since this holds for all $i \in \{ i_1,i_2,i_3 \}$, it contradicts the fact that the terminal pair $(\ell,t_j)$ is satisfied by the orientation.
    The argumentation is similar for positive clauses involving two variables, and symmetric for negative clauses.
\end{proof}


\begin{corollaryrep}\label{cor:grid}
    {\SO} is NP-complete on grid graphs of constant pathwidth.
\end{corollaryrep}
\begin{proof}
     (Sketch) As shown in \Cref{fig:grid} in the appendix, 
     We substitute the clause gadgets in the reduction of \cref{thm:np_hardness:main} by the alternative one with vertices $\ell'_j$ and $r'_j$ corresponding to $\ell$ and $r$  and connecting $\ell_1,\ldots, \ell_n, \ell'_1, \ldots, \ell'_m$ (resp., $r_1, \ldots, r_n, r'_1,  \ldots, r'_m$) by 
long directed paths.
It can be argued that the instance is equivalent to $\phi$ as in \cref{thm:np_hardness:main}.
\begin{figure}
    \centering
    \includegraphics[width=0.3\linewidth]{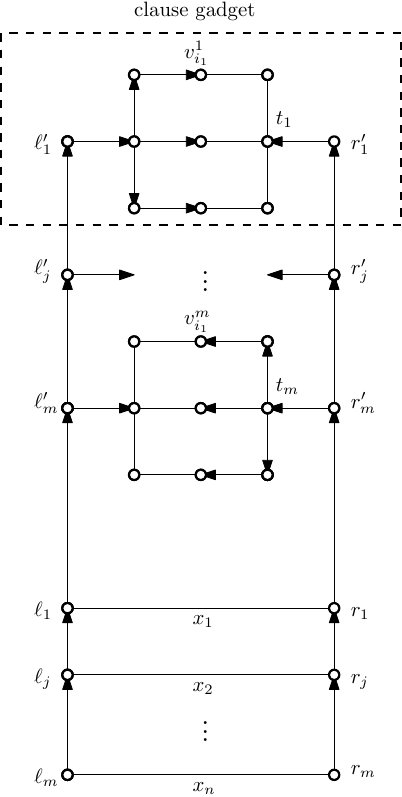}
    \caption{The constructed grid graph in \Cref{cor:grid}.}
    \label{fig:grid}
\end{figure}
\end{proof}

\section{FPT algorithm by distance to clique}\label{sec:dtc}

In this section, we give an FPT algorithm parameterized by distance to clique.
More precisely, we consider the case where the underlying undirected graph of
the input is at most $\dtc$ vertex deletions away from being a clique. In a
sense, this parameterization is the dense analogue of the parameterization by
vertex cover (because a vertex cover of a graph is a clique modulator of its
complement). As usual, we will assume that the set whose removal leaves the
underlying graph a clique is given to us in the input. The main result of this
section is the following.

\begin{theoremrep}\label{thm:dtc} {\SO} can be solved in time $4^\dtc n^{\bO(1)}$.
\end{theoremrep}

\begin{proof}[Proof of \cref{thm:dtc}]

We observe that the number of undirected edges in $E(S)$ is at most $\dtc$ from
the fact that $G$ is mixed acyclic and the number of undirected edges between
$S$ and $V\setminus S$ is at most $\dtc$ from \Cref{lem:undirectededge:dtc}.
Thus, we enumerate the at most $2^{2\cdot \dtc}=4^\dtc$ orientations of these
undirected edges and solve each resulting instance. In the remainder suppose
that all undirected edges have both endpoints in $V\setminus S$ and recall that
such edges form a matching by \cref{lem:undirectededge:clique}. 

Topologically sort the vertices of $V\setminus S$ so that whenever $(u,v)$ is
an arc we have that $u$ is sorted before $v$. This produces a partial ordering
$\sigma$ such that two vertices are only incomparable if they are the endpoints
of an undirected edge.

%
%
%
%
%

Let us consider a terminal pair $(s,t)$. If a directed path from $s$ to $t$
exists without using any edges of $E$, we simply remove this terminal pair, as it is trivially satisfied. On
the other hand, if no such path exists in the graph obtained by replacing each
edge of $E$ with a pair of anti-parallel arcs, we have a No instance and we can
reject. If neither case applies, we claim that either we can remove the pair
$(s,t)$ or there exists an edge $e$ such that orienting it in one direction
makes it impossible to connect $s$ to $t$ (and hence we are forced to choose
the other orientation).  Furthermore, we can find this edge in polynomial time,
allowing us to repeatedly simplify the instance until all pairs are satisfied
or we conclude we have a No instance.

Consider the directed paths from $s$ to $t$ in the graph where we have replaced
edges with anti-parallel arcs. We claim that there is a special edge $e$ which
appears in all these paths. To see this, take two (not necessarily distinct)
paths $P,P'$ from $s$ to $t$, both of which must use some undirected edges.
Suppose without loss of generality that the first edge of $P$ (call it $e$)
comes before the last edge of $P'$ (call it $e'$) in the ordering $\sigma$. If
$e\neq e'$ we have a contradiction: there is a directed path from $s$ to an
endpoint of $e$ (as $e$ is the first undirected edge of $P$); there are arcs
from $e$ to $e'$ from the ordering $\sigma$; and there is a directed path from
$e'$ to $t$, as $e'$ is the last undirected edge of $P'$.

We therefore conclude that $e=e'$. Because the paths $P,P'$ were not
necessarily distinct, applying this argument to $P$ with itself implies that
every path uses exactly one undirected edge. But then, any two paths $P,P'$
must in fact use the same unique undirected edge $e$. Furthermore, this edge
can be found in polynomial time via a shortest path algorithm which attempts to
find a path using as few undirected edges as possible.

Consider now the instances we obtain by orienting $e$ in either direction. If
both instances contain a directed path from $s$ to $t$ we can remove the pair
$(s,t)$ from the input, as it is always satisfied. Otherwise, one orientation
of $e$ is forbidden and we orient $e$ in the other direction. This satisfies
the terminal pair and we proceed in the same way until the whole instance has
been processed.  \end{proof}

We establish \cref{thm:dtc} starting via a series of reduction rules given by
the lemmas below. Throughout we will assume that we are given a set $S$ of
$\dtc$ vertices such that removing $S$ from the input graph leaves a mixed
graph whose underlying graph is a clique.  We will also assume that, using
\cref{prop:prepoc}, the input is a mixed \emph{acyclic} graph.

\begin{lemmarep}\label{lem:undirectededge:dtc} Let $G$ be a mixed acyclic graph
and $S$ a set of vertices such that the underlying graph of $G-S$ is a clique.
Then, each vertex in $S$ has at most one undirected edge to $V\setminus S$.
\end{lemmarep}

\begin{proof} 
Suppose that $u\in S$ has two undirected edges $\{u,v\}$ and
$\{u,w\}$ for $v,w,\in V\setminus S$. The underlying graph of $G[V\setminus S]$
forms a clique, so there exists either $(v,w)$, $(w,v)$, or $\{v,w\}$.
Therefore, $u,v,w$ form a cycle, which contradicts that $G$ is a mixed acyclic graph.  
\end{proof}

\begin{lemmarep}\label{lem:undirectededge:clique}
    In a mixed acyclic graph $G$, let $C$ be a clique in the underlying graph of $G$. Then there is no vertex incident to at least two undirected edges in $C$. In other words, the set of undirected edges in $C$ forms a matching.
\end{lemmarep}
\begin{proof}
    The proof is similar to \Cref{lem:undirectededge:dtc}.
    Suppose that there exists a vertex $u\in V(C)$ incident to undirected edges $\{u,v\},\{u,w\}$ for $v,w\in V(C)$. Since $C$ is a clique, there exists either $(v,w)$, $(w,v)$, or $\{v,w\}$. Therefore, $u,v,w$ form a cycle, which contradicts that $G$ is a mixed acyclic graph. 
\end{proof}

%
%

We can now proceed to the proof of \cref{thm:dtc}. The high-level idea is that
by using our previous observations the number of undirected edges incident on
$S$ is at most $2\dtc$; we will therefore enumerate all possible orientations
of these edges and check for each orientation if it can be extended to a
feasible solution. The challenge will be in establishing that this check can be
performed in polynomial time.

\section{Vertex cover number}\label{sec:vc}

In this section we consider {\SO} parameterized by the vertex cover number $\vc$ of the input graph,
arguably one of the most restrictive structural parameterizations one could consider.
We first show in \cref{thm:w1_hardness} that even in this extremely restrictive setting,
the problem remains intractable and prove its W[1]-hardness.
As a matter of fact, our reduction implies that the problem does not admit any $n^{o(\vc^2)}$-time algorithm under the ETH,
and our next result is in \cref{thm:algo_vc} to present such an optimal algorithm.

\subsection{Hardness}

\begin{theorem}\label{thm:w1_hardness}
    {\SO} is W[1]-hard parameterized by the vertex cover number $\vc$ of the input graph,
    and for any computable function $f$ it cannot be solved in time $f(\vc) n^{o(\vc^2)}$ under the ETH.
\end{theorem}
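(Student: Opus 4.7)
The plan is to reduce from a standard W[1]-hard problem with an $n^{\Omega(k)}$ ETH lower bound, most naturally a ``grid-like'' variant of $k$-\textsc{Clique} such as \textsc{Multicolored $k$-Clique} on $k$ color classes. I will produce a \SO\ instance $(G,\terminals)$ whose underlying graph has $\vc(G)=\bO(\sqrt{k})$; an algorithm running in $n^{o(\vc^2)}$ would then solve the source problem in $n^{o(k)}$ time, contradicting the ETH, and W[1]-hardness would follow because the reduction is polynomial in $n$ and FPT in $k$.

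Assume WLOG that $k$ is a perfect square and index the color classes by pairs $(i,j)\in[\sqrt{k}]\times[\sqrt{k}]$. Following the sketch in the introduction, the backbone of $G$ is a complete bipartite graph $K_{\sqrt{k},\sqrt{k}}$ with parts $A=\{a_1,\dots,a_{\sqrt{k}}\}$ and $B=\{b_1,\dots,b_{\sqrt{k}}\}$; together with a constant number of auxiliary ``global'' hubs analogous to the $\ell$ and $r$ of \cref{thm:np_hardness:main}, these $2\sqrt{k}+\bO(1)$ vertices will form the vertex cover of $G$. For each edge $\{a_i,b_j\}$ of the backbone I would attach a \emph{selection gadget}: an independent set of candidate vertices, one per admissible assignment for the slice of the input coupling row $i$ with column $j$. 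Each candidate is attached via undirected edges to the hubs $a_i$ and $b_j$, and the orientation of these edges will encode which candidate is activated. Local demand pairs routed through $a_i$, $b_j$, and the global hubs force each gadget to commit to exactly one candidate, reusing the variable-selection mechanism of \cref{thm:np_hardness:main}.

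The central step is \emph{inter-gadget consistency}: the $\sqrt{k}$ gadgets incident to a common hub $a_i$ must agree on the ``row-$i$ part'' of their choice, and symmetrically for each $b_j$. I would enforce this by introducing demand pairs whose only feasible routes enter one gadget, traverse $a_i$ (or $b_j$), and exit through another gadget, synchronizing the two selections as they pass through. This mirrors how in \cref{thm:np_hardness:main} a single terminal pair routed through $\ell$ and $r$ was able to coordinate undirected edges sitting in distant clause gadgets, here iterated across all $\binom{\sqrt{k}}{2}$ pairs of row/column hubs. With row-wise and column-wise consistency in place, every pair of selection gadgets ends up forced into a globally consistent configuration, so the constructed SO instance is solvable iff the source instance is.

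The main obstacle will be designing the synchronization demand pairs so that (i) they genuinely force coordinate-wise equality across the $\sqrt{k}$ gadgets meeting at each hub, rather than just some weaker correlation; (ii) they do not introduce bypass routes that let a solver certify an inconsistent set of selections by exploiting alternative directed paths through the backbone; and (iii) every auxiliary vertex added for routing falls into the independent set, preserving the bound $\vc(G)=\bO(\sqrt{k})$. The last point is particularly delicate, since any vertex that ends up adjacent to more than one candidate vertex across distinct gadgets threatens to balloon the cover. Once the consistency mechanism is calibrated, both conclusions of the theorem follow directly from the parameter bound and the ETH lower bound for the source problem.
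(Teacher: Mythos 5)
Your high-level plan matches the paper's: reduce from Multicolored $k$-Clique, build the \SO\ instance around a bipartite hub structure on $\bO(\sqrt{k})$ vertices (which forms the vertex cover), hang selection gadgets in the independent set indexed by pairs $(\alpha,\beta)\in[\sqrt{k}]^2$, and let each gadget encode the choice of one vertex from the color class $h^{-1}(\alpha,\beta)$. The parameter accounting is also right: $\vc=\bO(\sqrt{k})$ gives the $n^{o(\vc^2)}=n^{o(k)}$ ETH lower bound.

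However, the central mechanism you describe — ``the $\sqrt{k}$ gadgets incident to a common hub $a_i$ must agree on the row-$i$ part of their choice'' — is not what the paper does, and it is where your sketch stalls. In the paper's construction, each gadget selects a vertex from a \emph{distinct} color class, so there is nothing for gadgets sharing a hub to synchronize: the hubs are shared routing infrastructure, not shared state. The actual consistency the paper needs is \emph{within} a gadget, not between gadgets. Concretely, the paper doubles everything: hubs come in four families $\ell_\alpha,\ell'_\alpha,r_\alpha,r'_\alpha$ (so $4\sqrt{k}$, not $2\sqrt{k}$), and each color class $i$ with $h(i)=(\alpha,\beta)$ gets two independent sets $X_i$ and $Y_i$. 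Crucially, each candidate is attached by exactly one undirected edge and one arc with controlled direction: $x^i_j$ has the edge $\{\ell_\alpha,x^i_j\}$ and the arc $(x^i_j,r_\beta)$, while $y^i_j$ has the arc $(\ell'_\alpha,y^i_j)$ and the edge $\{r'_\beta,y^i_j\}$. This asymmetry is exactly what kills the ``bypass routes'' you flag as an obstacle. Consistency demands $(x^i_j,y^i_{j'})$ for $j'\neq j$ then force $X_i$ and $Y_i$ to encode the same selection, and edge-checking demands between $X_{i_1}$ and $Y_{i_2}$ are satisfiable exactly when at least one endpoint of a non-edge is deselected, because the only feasible routes pass through either the $\ell,\ell'$ pair or the $r,r'$ pair. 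Your draft attaches each candidate ``via undirected edges to the hubs $a_i$ \emph{and} $b_j$,'' which gives too much orientation freedom and would indeed admit bypass routes; without the doubling and the edge/arc mix, the construction you outline does not have a clear path to completion. So the proposal points in the right direction but leaves the actual gadget design — the part that makes the reduction work — as an unresolved ``obstacle.''
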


\begin{proof}
    In {\kMC} we are given a graph $G$ and a partition of $V(G)$ into $k$ independent sets each of size $n$,
    and we are asked to determine whether $G$ contains a $k$-clique.
    It is well-known that {\kMC} parameterized by $k$ is W[1]-hard and does not admit any $f(k) n^{o(k)}$-time algorithm,
    where $f$ is any computable function,
    unless the ETH is false~\cite{books/CyganFKLMPPS15}.
    Let $(G, k)$ be an instance of \kMC, where we assume without loss of generality that $\sqrt{k} \in \mathbb{N}$
    (one can do so by adding dummy independent sets connected to all the other vertices of the graph).
    Recall that we assume that $G$ is given to us partitioned into $k$ independent sets $V_1, \ldots V_k$,
    where $V_i = \{ v^i_1, \ldots, v^i_n \}$.
    We will construct in polynomial time an equivalent instance $(H,\terminals)$ of \SO, with $\vc(H) = \bO( \sqrt{k} )$.

    \proofsubparagraph{Construction.}
    We first introduce the vertices $\setdef{\ell_\alpha, \, \ell'_\alpha, \, r_\alpha, \, r'_\alpha}{\alpha \in [ \sqrt{k} ]}$
    and add the arcs $\ell_{\alpha} \to \ell'_{\beta}$ and $r_{\alpha} \to r'_{\beta}$ for all $\alpha , \beta \in [ \sqrt{k} ]$.
    For all $i \in [k]$, we further introduce the independent sets $X_i = \{x^i_1, \ldots, x^i_n \}$
    and $Y_i = \{ y^i_1, \ldots, y^i_n \}$.
    We fix a bijective function $h \colon [k] \to [ \sqrt{k} ]^2$ that maps
    every independent set $V_i$ to a distinct pair $(\alpha, \beta) \in [ \sqrt{k} ]^2$,
    and for all $i \in [k]$ with $h(i) = (\alpha, \beta)$
    \begin{itemize}
        \item we add the edge $\{ \ell_{\alpha}, x^i_j \}$ for all $x^i_j \in X_i$,
        \item we add the arc $(x^i_j, r_{\beta})$ for all $x^i_j \in X_i$,
        \item we add the arc $(\ell'_{\alpha}, y^i_j)$ for all $y^i_j \in Y_i$,
        \item we add the edge $\{ r'_{\beta}, y^i_j \}$ for all $y^i_j \in Y_i$.
    \end{itemize}
    This completes the construction of the graph $H$,
    and we refer to \cref{fig:w1_hardness} for an illustration of a part of it.
    As for the terminal pairs, we proceed in three steps:
    \begin{enumerate}
        \item We initially add into $\terminals$ the pairs $(\ell_{\alpha}, r_{\beta})$ and $(\ell'_{\alpha}, r'_{\beta})$,
        for all $\alpha, \beta \in [  \sqrt{k} ]$.

        \item Next, for all $i \in [k]$ and $j \in [n]$,
        we add the terminal pair $(x^i_j, y^i_{j'})$ into $\terminals$, for all $j' \in [n] \setminus \{j\}$.
        We refer to the terminal pairs added in this step as \emph{consistency terminal pairs}.

        \item Lastly, for distinct $i_1, i_2 \in [k]$,
        if $\{ v^{i_1}_{j_1}, v^{i_2}_{j_2} \} \notin E(G)$
        we add the terminal pairs $( x^{i_1}_{j_1}, y^{i_2}_{j_2}), (x^{i_2}_{j_2}, y^{i_1}_{j_1})$ into $\terminals$, where $j_1, j_2 \in [n]$.
        We refer to the terminal pairs added in this step as \emph{edge-checking terminal pairs}.

    \end{enumerate}
    This completes the construction of the instance $(H,\terminals)$.

    \begin{figure}[t]
    \centering
    \begin{tikzpicture}[scale=0.6, transform shape]
    
    \node[vertex] (v11) at (5,15) {};
    \node[] () at (5,14.6) {$\vdots$};
    \node[vertex] (v12) at (5,14) {};
    \draw[] (4.75,13.75) rectangle (5.25,15.25);

    \begin{scope}[shift={(0,-2)}]
        \node[vertex] (v21) at (5,15) {};
        \node[] () at (5,14.6) {$\vdots$};
        \node[vertex] (v22) at (5,14) {};
        \draw[] (4.75,13.75) rectangle (5.25,15.25);
    \end{scope}

    \begin{scope}[shift={(0,-3.25)}]
        \node[] () at (5,14.6) {$\vdots$};
    \end{scope}

    \begin{scope}[shift={(0,-4.5)}]
        \node[vertex] (vk1) at (5,15) {};
        \node[] () at (5,14.6) {$\vdots$};
        \node[vertex] (vk2) at (5,14) {};
        \draw[] (4.75,13.75) rectangle (5.25,15.25);
    \end{scope}

    \node[vertex] (l11) at (4,14.5) {};
    \node[] () at (4,14.8) {$\ell_1$};
    \node[vertex] (l1k) at (4,10) {};
    \node[] () at (4.1,10.4) {$\ell_{\sqrt{k}}$};

    \draw[] (l11)--(v11);
    \draw[] (l11)--(v12);
    \draw[] (l11) edge [bend right] (v21);
    \draw[] (l11) edge [bend right] (v22);
    \draw[] (l1k)--(vk1);
    \draw[] (l1k)--(vk2);

    \begin{scope}[shift={(2,0)}]
        \node[vertex] (r11) at (4,14.5) {};
        \node[] () at (4,14.8) {$r_1$};

        \node[vertex] (r12) at (4,12.5) {};
        \node[] () at (4,12.75) {$r_2$};
        
        \node[vertex] (r1k) at (4,10) {};
        \node[] () at (4.2,10.3) {$r_{\sqrt{k}}$};
    \end{scope}

    \draw[-triangle 45] (v11)--(r11);
    \draw[-triangle 45] (v12)--(r11);
    \draw[-triangle 45] (v21)--(r12);
    \draw[-triangle 45] (v22)--(r12);
    \draw[-triangle 45] (vk1)--(r1k);
    \draw[-triangle 45] (vk2)--(r1k);

    \begin{scope}[shift={(0,-8)}]
    
        \node[vertex] (u11) at (5,15) {};
        \node[] () at (5,14.6) {$\vdots$};
        \node[vertex] (u12) at (5,14) {};
        \draw[] (4.75,13.75) rectangle (5.25,15.25);

        \begin{scope}[shift={(0,-2)}]
            \node[vertex] (u21) at (5,15) {};
            \node[] () at (5,14.6) {$\vdots$};
            \node[vertex] (u22) at (5,14) {};
            \draw[] (4.75,13.75) rectangle (5.25,15.25);
        \end{scope}
    
        \begin{scope}[shift={(0,-3.25)}]
            \node[] () at (5,14.6) {$\vdots$};
        \end{scope}
    
        \begin{scope}[shift={(0,-4.5)}]
            \node[vertex] (uk1) at (5,15) {};
            \node[] () at (5,14.6) {$\vdots$};
            \node[vertex] (uk2) at (5,14) {};
            \draw[] (4.75,13.75) rectangle (5.25,15.25);
        \end{scope}

        \node[vertex] (l21) at (4,14.5) {};
        \node[] () at (3.7,14.3) {$\ell'_1$};
        \node[vertex] (l2k) at (4,10) {};
        \node[] () at (3.7,9.7) {$\ell'_{\sqrt{k}}$};
    
        \draw[-triangle 45] (l21)--(u11);
        \draw[-triangle 45] (l21)--(u12);
        \draw[-triangle 45] (l21) edge [bend right] (u21);
        \draw[-triangle 45] (l21) edge [bend right] (u22);
        \draw[-triangle 45] (l2k)--(uk1);
        \draw[-triangle 45] (l2k)--(uk2);
    
        \begin{scope}[shift={(2,0)}]
            \node[vertex] (r21) at (4,14.5) {};
            \node[] () at (4,14.2) {$r'_1$};
    
            \node[vertex] (r22) at (4,12.5) {};
            \node[] () at (4,12.2) {$r'_2$};
            
            \node[vertex] (r2k) at (4,10) {};
            \node[] () at (4.2,9.7) {$r'_{\sqrt{k}}$};
        \end{scope}
    
        \draw[] (u11)--(r21);
        \draw[] (u12)--(r21);
        \draw[] (u21)--(r22);
        \draw[] (u22)--(r22);
        \draw[] (uk1)--(r2k);
        \draw[] (uk2)--(r2k);
        
    \end{scope}

    \draw[-triangle 45] (l11) edge [bend right] (l21);
    \draw[-triangle 45] (l11) edge [bend right] (l2k);
    \draw[-triangle 45] (l1k) edge [bend right] (l21);
    \draw[-triangle 45] (l1k) edge [bend right] (l2k);

    \draw[-triangle 45] (r11) edge [bend left] (r21);
    \draw[-triangle 45] (r11) edge [bend left] (r22);
    \draw[-triangle 45] (r11) edge [bend left] (r2k);
    \draw[-triangle 45] (r12) edge [bend left] (r21);
    \draw[-triangle 45] (r12) edge [bend left] (r22);
    \draw[-triangle 45] (r12) edge [bend left] (r2k);
    \draw[-triangle 45] (r1k) edge [bend left] (r21);
    \draw[-triangle 45] (r1k) edge [bend left] (r22);
    \draw[-triangle 45] (r1k) edge [bend left] (r2k);    
    
    \end{tikzpicture}
    \caption{Part of the construction of graph $H$.
    Rectangles denote independent sets of size $n$.}
    \label{fig:w1_hardness}
    \end{figure}

    It is easy to see that $\setdef{\ell_\alpha, \, \ell'_\alpha, \, r_\alpha, \, r'_\alpha}{\alpha \in [ \sqrt{k} ]}$ is a vertex cover of $H$,
    thus $\vc(H) = \bO(\sqrt{k})$.
    To complete the proof, we show the following claim.

    \begin{claimrep}\label{claim:lMCtoSO}
            The instance $(G,k)$ of {\kMC} is a yes-instance if and only if the constructed instance $(H,\terminals)$ of \SO\ is a yes-instance.
    \end{claimrep}

    \begin{claimproof}
    For the forward direction, let $s \colon [k] \to [n]$ such that $\mathcal{V} = \setdef{v^i_{s(i)}}{i \in [k]} \subseteq V(G)$ is a $k$-clique of $G$.
    Let $i \in [k]$ where $h(i) = (\alpha, \beta)$.
    Consider the orientation where we orient the edges $\{ \ell_{\alpha}, x^i_j \}$ and $\{ r'_{\beta}, y^i_j \}$ as follows:
    \begin{itemize}
        \item if $j = s(i)$, then $\ell_{\alpha} \to x^i_j$ and $y^i_j \to r'_\beta$,
        \item if $j \neq s(i)$, then $x^i_j \to \ell_\alpha$ and $r'_\beta \to y^i_j$.
    \end{itemize}
    We argue that this orientation satisfies all terminal pairs.
    For the terminal pairs added in Step~1, that is indeed the case since $h$ is a bijection,
    i.e., for all $(\alpha,\beta) \in [\sqrt{k}]^2$ there exists $i \in [k]$ with $h(i) = (\alpha, \beta)$,
    thus the terminal pairs $(\ell_\alpha, r_\beta)$ and $(\ell'_\alpha, r'_\beta)$ are satisfied by the paths
    $\ell_\alpha \to x^i_{s(i)} \to r_\beta$ and $\ell'_\alpha \to y^i_{s(i)} \to r'_\beta$ respectively.
    For the consistency terminal pairs, fix $i \in [k]$ with $h(i) = (\alpha,\beta)$ and notice that all terminal pairs $(x^i_j, y^i_{j'})$ with $j \neq s(i)$
    are satisfied by the path $x^i_j \to \ell_\alpha \to \ell'_\alpha \to y^i_{j'}$.
    As for the case when $j = s(i)$, those terminal pairs are satisfied via the path $x^i_{s(i)} \to r_\beta \to r'_\beta \to y^i_{j'}$
    (recall that $j' \neq s(i)$).
    Lastly, for the edge-checking terminal pairs, consider a non-edge $\{ v^{i_1}_{j_1}, v^{i_2}_{j_2} \} \notin E(G)$,
    where $h(i_1) = (\alpha_1,\beta_1)$ and $h(i_2) = (\alpha_2,\beta_2)$ for distinct $i_1,i_2 \in [k]$.
    Since $\mathcal{V}$ is a clique, it holds that it contains at most one among $v^{i_1}_{j_1}$ and $v^{i_2}_{j_2}$.
    Assume without loss of generality that $v^{i_1}_{j_1} \notin \mathcal{V}$.
    In that case, the terminal pairs $(x^{i_1}_{j_1}, y^{i_2}_{j_2})$ and $(x^{i_2}_{j_2}, y^{i_1}_{j_1})$ are satisfied via the paths
    $x^{i_1}_{j_1} \to \ell_{\alpha_1} \to \ell'_{\alpha_2} \to y^{i_2}_{j_2}$
    and $x^{i_2}_{j_2} \to r_{\alpha_2} \to r'_{\alpha_1} \to y^{i_1}_{j_1}$
    respectively.

    For the converse direction, assume there exists an orientation of $H$ such that all terminal pairs in $\terminals$ are satisfied.
    Notice that due to the terminal pairs added in Step 1 and $h$ being bijective,
    for all $i \in [k]$ with $h(i) = (\alpha, \beta)$ there exist $j_1,j_2 \in [n]$ such that $\ell_\alpha \to x^i_{j_1}$ and $y^i_{j_2} \to r'_\beta$.
    Assume that $j_1 \neq j_2$, and notice that then the terminal pair $(x^i_{j_1}, y^i_{j_2})$ cannot be satisfied by our orientation, a contradiction.
    Consequently, let $s \colon [k] \to [n]$ be the function such that for all $i \in [k]$,
    $\ell_\alpha \to x^i_{s(i)}$ and $y^i_{s(i)} \to r'_\beta$, where $h(i) = (\alpha,\beta)$.
    We argue that $\mathcal{V} = \setdef{v^i_{s(i)}}{i \in [k]}$ is a $k$-clique of $G$.
    Assume that this is not the case, and let $v^{i_1}_{s(i_1)}, v^{i_2}_{s(i_2)} \in \mathcal{V}$
    such that $\{ v^{i_1}_{s(i_1)}, v^{i_2}_{s(i_2)} \} \notin E(G)$.
    In that case, it holds that $(x^{i_1}_{s(i_1)}, y^{i_2}_{s(i_2)}) \in \terminals$,
    however this terminal pair cannot be satisfied by the considered orientation, a contradiction.
    \end{claimproof}
\end{proof}

\subsection{XP algorithm}

In this section, we give an XP algorithm parameterized by $\vc$ with complexity
$n^{\bO(\vc^2)}$, matching our lower bound result.

\begin{theorem}\label{thm:algo_vc}
    There is an algorithm that solves {\SO} in time $n^{\bO(\vc^2)}$.
\end{theorem}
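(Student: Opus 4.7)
The plan is to follow the strategy sketched in the introduction. After applying \cref{prop:prepoc} to contract cycles we may assume the input is a mixed acyclic graph; let $C$ be a vertex cover of size $\vc$ and $I = V \setminus C$ the corresponding independent set. The key structural observation is that, since $I$ has no internal edges, any directed path in a solution that traverses several vertices of $C$ decomposes into consecutive ``hops'' from one cover vertex to the next, each hop being either a single arc or edge within $C$ or a length-$2$ path through a single vertex of $I$. Hence the full reachability relation among cover vertices in the solution is determined by the length-$\le 2$ reachability relation together with the specific intermediate $I$-vertex used in each 2-hop.

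The algorithm enumerates, for every ordered pair $(u,v) \in C \times C$, a guess from $\{\bot, \star\} \cup I$: $\bot$ means no length-$\le 2$ path from $u$ to $v$ is present in the intended solution, $\star$ means there is an arc or undirected edge oriented as $u \to v$, and $w \in I$ means the solution contains the 2-hop path $u \to w \to v$. Since there are $O(\vc^2)$ pairs and $n+2$ options per pair, the number of configurations is $(n+2)^{O(\vc^2)} = n^{O(\vc^2)}$. For each configuration we perform the following polynomial-time checks: every guess is realizable (input arcs agree with the guessed direction and undirected edges can be oriented accordingly); the orientations that different guesses impose on the incident edges of any common $I$-vertex are mutually consistent (no edge is required to be oriented both ways); and the resulting partial orientation is acyclic.

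If these checks succeed, we compute the transitive closure of the guessed length-$\le 2$ relation, obtaining the full cover-to-cover reachability relation $\mathcal{R}$ in the solution. We then verify the demands. For a pair $(s,t)$ with $s,t \in C$ we simply check $(s,t) \in \mathcal{R}$. For a pair with $s \in I$ (the case $t \in I$ is symmetric) we search for a cover-neighbor $u$ of $s$ such that the connection between $s$ and $u$ is compatible with the orientation $s \to u$ and $(u,t) \in \mathcal{R}$; the case $s,t \in I$ is analogous, replacing both endpoints by cover-neighbors. Since distinct demands may freely pick distinct neighbors and each undirected edge is oriented independently, the only global obstruction is an edge required in two opposite directions, which is detected in polynomial time. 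Remaining unoriented edges are then oriented arbitrarily while preserving acyclicity and do not affect the satisfied demands.

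The principal obstacle I anticipate is handling terminals that lie in $I$ without blowing up the guess space, as we cannot afford to enumerate the orientations of all (up to $\vc$) edges incident to each such terminal. The resolution is that, once $\mathcal{R}$ is fixed by the cover-pair guesses, the choice of first (or last) cover vertex on each terminal's path reduces to an independent polynomial-time check per terminal, and the only shared constraint among terminals is the ``opposite orientations'' conflict on a common edge, which is easy to detect. Combined with the $n^{O(\vc^2)}$ enumeration for cover-cover pairs, the overall running time is $n^{O(\vc^2)}$, matching the lower bound from \cref{thm:w1_hardness}.
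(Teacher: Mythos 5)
The enumeration phase of your proposal (guess for each ordered pair of cover vertices a label in $\{\bot,\star\}\cup I$, giving $n^{\bO(\vc^2)}$ configurations) matches the paper's algorithm exactly, and computing the transitive closure $\mathcal R$ of the guessed relation is also the right idea --- it corresponds to the paper's Claim~\ref{lem:2path:vc}. The gap is in the verification of demands involving $I$-terminals. Your assertion that ``each undirected edge is oriented independently'' is precisely the thing that is false, and it is where the paper has to work. In a mixed acyclic graph, if $w\in I$ still has two undirected edges $\{u,w\},\{v,w\}$ after the guessing, then (because the only mixed $u$--$v$ path passes through $w$, and $w$ was not chosen as the length-$2$ intermediate for $(u,v)$ or $(v,u)$) \emph{all} of $w$'s remaining undirected edges must receive the same orientation --- all away from $w$, or all towards $w$ --- in any solution consistent with the guess. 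This is the paper's Claim~\ref{lem:undirectededge:vc}, and your argument does not use it. Consequently the decision is made \emph{per $I$-terminal}, not per edge, and it is shared across every demand meeting that terminal; in particular, a terminal that is both a source and a sink of distinct demands must have one of those roles served by an arc, and which one depends on global structure.

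Because you allow the edges incident to an $I$-terminal to point in different directions, the realizability check you propose (``no single edge required in both directions'') is not a complete criterion, and the per-demand ``search for a compatible cover-neighbor'' can make a commitment that blocks a later demand even when a consistent choice exists. You would need either the paper's mechanism --- build the auxiliary digraph on demand pairs with an arc $(s_i,t_i)\to(s_j,t_j)$ whenever $t_i=s_j$, argue it is acyclic, and process demands in this order, committing the source terminal's undirected edges outward before the sink's --- or an equivalent global argument (e.g., one Boolean per $I$-terminal and a 2-SAT formulation). As written, this core step is missing and the correctness of the verification phase does not go through.
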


\begin{proof}

First we note that contracting all mixed cycles as in \Cref{prop:prepoc}
does not increase the size of a minimal vertex cover, so we can suppose that our graph is mixed acyclic.

Let $G$ be a graph of vertex cover number $\vc$.
Let $S\subseteq V$ be a vertex cover $G$ of size $\vc$. It can be computed in time $2^{\bO(\vc)}$ \cite{DBLP:conf/stacs/0001N24}.
First we guess the orientation of the edges of $G[S]$.
There are $2^{\bO(\vc^2)}$ possibilities.
Then, for all pairs $(u,v)\in S^2$, we guess if there is, in the final solution, an oriented path $u\to w\to v$ with $w\in V\setminus S$;
if we guessed there is at least one, we guess one such $w$, and orient its edges in consequence.
There are $n^{\bO(\vc^2)}$ possibilities.
As we will see in \Cref{lem:2path:vc}, after this guessing phase, as $S$ is a vertex cover,
the connectivity of $S$ in any solution respecting the guesses is already decided.
This will allow us to check the existence of such a solution in polynomial time.

\begin{claimrep}\label{lem:undirectededge:vc}
    Let $w\in V\setminus S$, $u,v\in S$ such that at the end of the guessing steps, there are two undirected edges $\{u,w\}$ and $\{v,w\}$. In any feasible solution respecting the guesses, either we orient them both from $w$ to $u,v$ or from $u,v$ to $w$.
\end{claimrep}

\begin{claimproof}
    First we observe that there are no mixed paths from $u$ to $v$ or from $v$ to $u$ that do not use $w$, as otherwise it would create a cycle, which contradicts the acyclicity of our solution. So, if we guessed that there is a path of length two from $u$ to $v$ or from $v$ to $u$ using $w'\in V\setminus S$, it must be that $w'=w$. 
    
    But also, as $\{u,w\}$ and $\{v,w\}$ are not oriented yet, it is not the case. So in any solution respecting the guesses, there is no path of length 2 from $u$ to $v$ or from $v$ to $u$. 
    
    Finally, the only possibility for orienting $\{u,w\}$ and $\{v,w\}$ without creating such paths is to orient both of them from $w$ to $u,v$ or from $u,v$ to $w$.
\end{claimproof}

The above claim shows that for any $w\in V\setminus S$, after the guessing steps, either all of its undirected edges must be oriented from $w$ to $S$ or from $S$ to $w$.

\begin{claimrep}\label{lem:2path:vc}
    Let $u,v\in V$; in any solution respecting the guesses, if there is a path from $u$ to $v$, then there is one that only uses already oriented edges after the guessing steps, except possibly for the first one if $u\in V\setminus S$, and the last one if $v\in V\setminus S$.
\end{claimrep}

\begin{claimproof}
    As $S$ is a vertex cover, for any $u,v\in V$, in a path $u=u_1\to u_2 \to\dots \to u_k=v$, there is no $u_i,u_{i+1}\in V\setminus S$. So, except for the the first edge if $u\in V\setminus S$ or the last edge if $v\in V\setminus S$, this path is a succession of $u_i\to u_{i+1}$ with $u_i,u_{i+1}\in S$, and those edges are already oriented after the guessing steps, or of $u_i\to u_{i+1}\to u_{i+2}$ with $u_i,u_{i+2}\in S$ and $u_{i+1}\in V\setminus S$, but we already guessed the existence of such paths of length 2, so in any feasible solution respecting the guesses, there is a path from $u_i$ to $u_{i+2}$ where the edges were already oriented after the guessing steps.
\end{claimproof}

This claim shows us that after the guessing steps, for any $(s_i,t_i)$ pair of
terminals that is not already satisfied, the only way to satisfy them while
respecting the guessing steps is to orient the edges of $s_i$ and $t_i$.

We now argue that there exists a demand pair $(s_i,t_i)$ which should be treated ``first''. To see this, consider an auxiliary directed graph which has a vertex for each demand pair, and an arc from $(s_i,t_i)$ to $(s_j,t_j)$ when $s_j=t_i$. If this auxiliary graph contains a directed cycle, this implies that the original instance must be oriented in a way that creates a directed cycle, but since we assumed that $G$ is mixed acyclic, this is impossible. Hence, the interesting case is when the auxiliary graph is a DAG, therefore there exists a demand pair $(s_i,t_i)$ such that for all 
other demands $(s_j,t_j)$ we have, $t_j\neq s_i$

We can apply the following method: take such a pair
$(s_i,t_i)$ (as described in the previous paragraph). If $s_i\in V\setminus S$, then orient all undirected edges between
$s_i$ and $S$ from $s_i$ to $S$. 
If the demand $(s_i,t_i)$ is now satisfied, that is, there is a path from $s_i$ to $t_i$ using arcs and oriented edges, we remove this demand.

If $(s_i,t_i)$ is still unsatisfied, if
$t_i\in V\setminus S$, then orient all undirected edges between $t_i$ and $S$
from $S$ to $t_i$. 

We have now handled the demand $(s_i,t_i)$: if it is satisfied after the steps above, we remove it from the instance and continue with the remaining demands; otherwise this demand cannot be satisfied and we reject.
We repeat this method until there
are no unsatisfied terminals.  

This method is correct: indeed, from
\Cref{lem:2path:vc}, after the guessing steps, the satisfaction of $(s_i,t_i)$
only depends on the orientation of the undirected edges incident on $\{s_i ,t_i\}\setminus S$. Furthermore, from the same claim, if $s_i\in V\setminus S$, in
any feasible solution, its undirected edges will only be used to satisfy pairs
of terminals that contain $s_i$. But our choice of terminals ensures that for
any such pair $(s_j,t_j)$, $s_i\neq t_j$, so $s_i=s_j$. So in a feasible
solution respecting the guessing steps, if an undirected edge of $s_i$ is used
to satisfy such a pair, it will be oriented from $s_i$ to $S$, so it was safe to select this orientation.  For the second step, if $(s_i,t_i)$ is still not
satisfied, then from \Cref{lem:2path:vc}, the only way to satisfy it while
respecting the guessing steps is, if $t_i\in V\setminus S$, to direct one
undirected edge between $t_i$ and $S$ from $S$ to $t_i$. But from
\Cref{lem:undirectededge:vc}, they all have the same orientation, so it was safe to orient
all of them from $S$ to $t_i$. If $(s_i,t_i)$ is still unsatisfied after these steps, then, from
\Cref{lem:2path:vc}, there is no way to satisfy those terminals while
respecting the guessing steps. 

The above method can be applied in time $n^{\bO(1)}$,
and must be applied at most $n^2$ times.  In the end, we have an algorithm of
complexity $n^{\bO(\vc^2)}$.  \end{proof}

\section{Parameterized by Number of Edges or Arcs}

Arguably one of the most natural ways to structurally parameterize \SO\ is to
consider either the number of undirected edges $|E|$ or the number of arcs
$|A|$ as a parameter. Our goal in this section is to investigate the
complexities of these two parameterizations.

We begin with the parameterization by $|E|$, which is trivially FPT: one can
simply guess the orientation of each edge and then check if this gives a valid
solution. The complexity is therefore $2^{|E|}n^{\bO(1)}$. Our contribution is
to observe that this trivial algorithm is perhaps optimal, as if one could
obtain an algorithm of complexity $(2-\varepsilon)^{|E|}n^{\bO(1)}$, then the
Strong Exponential Time Hypothesis (SETH) would be false. Recall that,
informally, the SETH states that there is no algorithm for \textsc{SAT} running
faster than the $2^n$ brute-force algorithm enumerating all assignments. We
show the lower bound by reusing a simple reduction from \textsc{SAT} which
appeared in \cite{ArkinH02}.

\begin{theoremrep}\label{thm:seth} If there exists an $\varepsilon>0$ such that
\SO\ can be solved in time $(2-\varepsilon)^{|E|}n^{\bO(1)}$, then the
satisfiability of a CNF formula $\phi$ with $n$ variables can be decided in
time $(2-\varepsilon)^n|\phi|^{\bO(1)}$, hence the SETH is false. \end{theoremrep}

\begin{proof}

Given a CNF formula $\phi$ with variables $x_1,\ldots,x_n$ we construct an
instance of \SO\ with $n$ undirected edges such that the new instance has a
solution if and only if the formula is satisfiable. Clearly, if we perform this
reduction (in polynomial time), then we obtain the theorem.

For each variable $x_i$ of $\phi$ we construct an edge in the new graph, with
one endpoint labeled $x_i$ and the other labeled $\neg x_i$. For each clause
$c$ of $\phi$, suppose that $c$ contains the literals
$\ell_1,\ell_2,\ldots,\ell_p$. We construct two vertices $s_c,t_c$ and the
terminal pair $(s_c,t_c)$. We add arcs from $s_c$ to the vertices labeled with
the \emph{negations} of all literals that appear in $c$; and we add arcs from
all vertices labeled with literals that appear in $c$ to $t_c$. This completes
the construction.

Correctness now can be seen as follows. If $\phi$ is satisfiable we orient the
edges so they point towards the true literal of each variable. Each clause $c$
contains a true literal, so there is an arc from $s_c$ to the vertex labeled
with the negation of this literal, then the edge is oriented to the vertex
labeled with the true literal, and then there is an arc to $t_c$, so the demand
$(s_c,t_c)$ is satisfied. For the converse direction, if there is a solution to
the \SO\ instance we extract an assignment to $\phi$ by setting a variable
$x_i$ to true if and only if the edge incident to the vertex labeled $x_i$ is
oriented toward this vertex. We claim this satisfies all clauses. Indeed, if a
clause $c$ were \emph{not} satisfied, this means that all edges are oriented
towards vertices labeled with the negations of the literals of $c$. This
implies that the demand $(s_c,t_c)$ is not satisfied, because any path starting
from $s_c$ must move towards a vertex labeled with the negation of a literal of
$c$ and since the undirected edges cannot be used to exit such a vertex we
cannot reach $t_c$. We conclude that we have a satisfying assignment of $\phi$.
\end{proof}

The main result of this section is then to establish that the parameterization
by $|A|$ also gives a fixed-parameter tractable problem, and indeed that we are
again able to obtain a single-exponential parameter dependence. This
parameterization is significantly more challenging, so we will need to rely on
structural properties obtained by any graph where $|A|$ is small after we apply
some simple reduction rules. The main result we obtain is stated below.

\begin{theorem}\label{thm:arcs} \SO\ admits a $2^{\bO(|A|)}n^{\bO(1)}$ algorithm.
\end{theorem}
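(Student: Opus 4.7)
The plan is to first apply the preprocessing rules of \Cref{prop:prepoc} and \Cref{prop:remove:degree:1} exhaustively, so that in the resulting equivalent instance $G$ is mixed acyclic and every vertex of the underlying graph has degree at least $2$. A key structural consequence is that the undirected subgraph $F = (V,E)$ contains no cycle (an undirected cycle would in particular be a mixed cycle, contradicting \Cref{prop:prepoc}), and hence $F$ is a forest and $A$ is a feedback edge set of the underlying graph.

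Next, I would use a double-counting argument to bound the ``complicated'' part of $F$ in terms of $|A|$. Let $V_A$ denote the set of vertices incident on some arc of $A$, so $|V_A| \leq 2|A|$, and let $V_3^E$ denote the set of vertices $v$ with undirected degree $d_E(v) \geq 3$. Every leaf of $F$ has undirected degree $1$ but total degree at least $2$, so each leaf of $F$ lies in $V_A$, giving at most $2|A|$ leaves in $F$. Combined with the classical fact that a tree with $b$ branching vertices has at least $b+2$ leaves, this yields $|V_3^E| = O(|A|)$ and, via a handshake argument, $\sum_{v \in V_3^E \cup V_A} d_E(v) = O(|A|)$.

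The branching step is then natural: for every undirected edge incident on a vertex of $V_3^E \cup V_A$, guess one of its two orientations. Since by the previous paragraph the total number of such edges is $O(|A|)$, this produces $2^{O(|A|)}$ sub-instances of \SO. In each sub-instance, every edge that remains undirected has both endpoints outside $V_3^E \cup V_A$, so each such vertex has undirected degree at most $2$ in $F$ and carries no arc of $A$. Thus the undirected subgraph of each sub-instance is a disjoint union of paths whose internal vertices are incident on no arcs and each of whose endpoints is incident on at most one arc (the branching-oriented edge). This places each sub-instance in the restricted form described in the overview.

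The main obstacle is the final step: showing that the restricted case is solvable in polynomial time via a reduction to \textsc{2-SAT}. For each remaining undirected path $P_j$ I would introduce a boolean variable $x_j$ encoding its orientation, and argue that the satisfiability of each terminal pair $(s_i, t_i)$ can be captured by a conjunction of $2$-CNF clauses over the $x_j$'s. The key intuition is that, in the restricted structure, contracting each undirected path into a single super-vertex gives a directed multigraph in which every super-vertex has total arc-degree at most $2$ (one arc per endpoint). This extremely rigid structure forces the set of possible directed routes between any two terminals to be describable by binary implications on the orientation variables, enabling a \textsc{2-SAT} encoding. Invoking the classical linear-time algorithm for \textsc{2-SAT} on each of the $2^{O(|A|)}$ sub-instances then yields the claimed $2^{O(|A|)} n^{O(1)}$ bound.
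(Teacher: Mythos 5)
Your overall strategy matches the paper's: preprocess to a mixed acyclic graph with minimum degree~$2$, observe that the arcs form a feedback edge set so the undirected part is a forest, bound the number of ``complicated'' undirected edges by $O(|A|)$ via leaf/branching-vertex counting, branch on their orientations, and solve each resulting restricted instance (paths with arc-free interiors and at most one arc per endpoint) via \textsc{2-SAT}. Your branching set $V_3^E\cup V_A$ is slightly larger than the paper's (which branches only on vertices of mixed degree $\ge 3$), but the resulting bound is still $O(|A|)$, and your leaf-counting argument is essentially the paper's claim that $\sum_{v\in V_3}d(v)\le 3\,(\text{\#leaves})$.

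The gap is in the final \textsc{2-SAT} encoding. You propose one Boolean variable $x_j$ per remaining undirected path $P_j$, encoding ``its orientation'', and your super-vertex contraction intuition treats each path as a single choice. But a single variable per path can only represent a uniform left-to-right or right-to-left orientation, and valid solutions may need a ``broken'' orientation where edges on the two sides of an interior terminal point outward (or inward). Concretely, take a path $a - s_1 - s_2 - b$ where $a$ and $b$ each have one outgoing arc, with demands $(s_1,t_1)$ and $(s_2,t_2)$ where $t_1$ is reachable only from $a$'s arc and $t_2$ only from $b$'s arc. The instance is satisfiable by orienting $s_1\to a$, $s_2\to b$, and the middle edge arbitrarily, but no uniform orientation of the path satisfies both demands, so a per-path-variable \textsc{2-SAT} would report ``no''. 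The paper avoids this by introducing a variable per \emph{terminal lying inside a path}, encoding which endpoint of the path that terminal routes through, together with monotonicity (path-consistency) clauses between terminals on the same path; this captures exactly the extra degrees of freedom your encoding discards. Your contraction intuition is lossy precisely because it forgets \emph{where} each terminal sits inside its path.
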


Our high-level strategy will be made up of two parts. First, we define a (very)
restricted special case of \SO, where, notably, all undirected components are
paths (and we have some additional restrictions). With those severe
restrictions, this case becomes simpler. In particular, we will show that it admits an polynomial algorithm, via a reduction to \textsc{2-SAT}. Second, we
will show how a branching procedure can reduce any initial instance into a
collection of at most $2^{6|A|}=64^{|A|}$ instances of the restricted problem. The result will then follow by composing
the two algorithms. In order to ease presentation, we start with the algorithm
for the restricted case in \cref{sec:paths} and then give the algorithm for parameter $|A|$
in \cref{sec:arcs}.

\subsection{A Restricted Case}\label{sec:paths}

We define a restricted case of \SO\ as follows:

\begin{definition}\label{def:restricted} An instance $G=(V,E,A)$ of \SO\ is
called \emph{restricted} if we have the following: all vertices $v\in V$
of mixed degree at least $3$ are not incident on any unoriented edges, only arcs.
\end{definition}

Intuitively, a restricted instance which is also mixed acyclic is an instance
where each non-trivial undirected component is a path (because the maximum degree induced
by $E$ is $2$ and we have no cycles), the internal vertices of
each such path have no arcs connecting them to the outside, and furthermore each endpoint of such path has only one arc. In the following, we will show that these restricted instances are polynomial time solvable.

Our strategy to handle restricted instances of \SO\ will be the following:
first, we simplify our instance by making several observations allowing us to orient several edges deterministically. Then, we encode the obtained instance as a 2-SAT formula, by defining for each terminal in an undirected path a variable which encodes whether this vertex can reach (or be reached by) one endpoint of its path or the other in the final solution, and verifying that each terminal pair is satisfied.

\begin{lemmarep}\label{lem:restricted} There is an algorithm that takes a
restricted \SO\ instance $G=(V,E,A)$ that is mixed acyclic and decides if a
solution exists in time $n^{\bO(1)}$.  \end{lemmarep}

\begin{proof}

We begin with some simple observations. First, if we have a restricted \SO\
instance $G=(V,E,A)$ and we orient an undirected edge, that is, we produce a
new instance $G'=(V,E',A')$ by selecting an edge $e=\{u,v\}\in E$ and setting
$E'=E\setminus\{e\}$ and $A'=A\cup \{\ (u,v)\ \}$ or $A'=A\cup\{\ (v,u)\ \}$,
then the new instance is still restricted. This will be useful as we will first simplify the instance by observing that the orientation of some edges is forced.

Let $P_1,P_2,\ldots,P_c$ be the non-trivial components of $G[E]$ (that is, the ones with at least two vertices), which all induce
paths, as the instance is mixed acyclic and has maximum
degree $2$ in undirected edges. Suppose that there exists a terminal pair $(s,t)$
such that $s,t\in P_i$, for some $i\in[c]$. Then, there is a unique way to
satisfy this pair (as the graph is mixed acyclic), so we orient the edges of
$P_i$ connecting $s$ to $t$ appropriately and obtain a new instance. In the
remainder we assume that for each demand pair $(s,t)$ we have that $s\in P_i$,
$t\in P_j$ and $i<j$, which we can achieve by topologically sorting the
components.

Suppose now that there is a $P_i$ such that it has one incoming arc at its first endpoint and one outgoing arc at its second endpoint. We can observe that $P_i$ is oriented from its first endpoint to its second endpoint. Indeed, take $(s,t)$ a pair of terminals and a mixed path from $s$ to $t$ using $P_i$. If $s$ is in $P_i$, then $t$ is not from our previous simplification, so the mixed path uses the only outgoing arc from $P_i$, the one at its second endpoint, and this path does not conflict with the proposed orientation. The reasoning is symmetrical if $t\in P_i$. Otherwise, if $s,t\notin P_i$, then the only possibility is that it uses $P_i$ in its entirety from the first endpoint to the second one, which again, does not conflict with the proposed orientation. Therefore we orient it that way.

Finally, suppose we have $P_i$ such that $P_i$ has two incoming arcs and a source terminal $s$ in $P_i$. As its sink $t$ is not in $P_i$ and there is no mixed path beginning in $P_i$ and leaving it, the instance is trivially unsatisfiable. The reasoning is the same if $P_i$ has two outgoing arcs and one sink terminal.

We are reduced to the following instances: restricted \SO\ instances where each $P_i$ either has two outgoing arcs and only source terminals, or two incoming arcs and only sink terminals. Note that in that case if a terminal pair $(s,t)$ does not intersect a path $P_i$, there is no mixed path from $s$ to $t$ intersecting $P_i$. Therefore, if neither $s$ nor $t$ are in an undirected path, either there already is an oriented path from $s$ to $t$, and $(s,t)$ is already satisfied, or there is none and the instance is unsatisfiable. In the following, we suppose we do not have such pairs of terminals.

We will now reduce the problem to \textsc{2-SAT} in a way that if a solution
exists, the new instance is satisfiable; while if the new instance is satisfiable we can find some orientation satisfying all demands. This will be sufficient to solve the original instance.

For each non-trivial component $P_i$ of $G[E]$, we arbitrarily pick one endpoint of $P_i$ to be the left endpoint and call the other endpoint the right endpoint. For each terminal pair $(s,t)$, if $s$ is in a $P_i$, we introduce a variable $x_s$, and likewise, if $t$ is in a $P_i$, we introduce a variable $x_t$, whose informal meaning would be that if a variable is True, then its terminal will use the left endpoint of its path to join with its other terminal, and if it is false, it will use the right endpoint.
We now construct a \textsc{2-SAT} instance as follows:

\begin{enumerate}

\item (Path Consistency clause): For each $s_1, s_2$ (respectively $t_1,t_2$) in the path $P_i$ such that $s_1$ is strictly to the left of $s_2$ in $P_i$, we add the clause $x_{s_1}\lor \neg x_{s_2}$ (respectively $x_{t_1}\lor \neg x_{t_2} $)

\item (Path-Path Terminal Consistency clause): For each terminal pair $(s,t)$ such that $s$ is in $P_i$ and $t$ is in $P_j$ such that there is no oriented path between:
    \begin{itemize}
        \item the left endpoint of $P_i$ and the left endpoint of $P_j$, we add the clause $\neg x_s\lor \neg x_t$
        \item the right endpoint of $P_i$ and the left endpoint of $P_j$, we add the clause $x_s\lor \neg x_t$
        \item the left endpoint of $P_i$ and the right endpoint of $P_j$, we add the clause $\neg x_s\lor x_t$
        \item the right endpoint of $P_i$ and the right endpoint of $P_j$, we add the clause $x_s\lor x_t$
    \end{itemize}
\item (Path-Vertex Terminal Consistency Clause): For each terminal pair $(s,t)$ such that $s$ (respestively $t$) is in $P_i$ and $t$ (respectively $s$) is not in any non-trivial path, if there is no oriented path between:
    \begin{itemize}
        \item the left endpoint of $P_i$ and $t$ (respectively $s$ and the left endpoint of $P_i$), we add the clause $\neg x_s$ (respestively $\neg x_t$).
        \item the right endpoint of $P_i$ and $t$ (respesctively $s$ and the right endpoint of $P_i$), we add the clause $ x_s$ (respectively $x_s$).
    \end{itemize}

\end{enumerate}

This completes the construction and we now need to argue for correctness. 
For one direction, suppose that there exists an orientation satisfying all
demand pairs. For each terminal pair $(s,t)$, pick a directed path $P_{s,t}$ from $s$ to $t$ in the solution. If $s$ (respectively $t$) is in $P_i$ and uses the left endpoint of $P_i$, let $x_s=\textrm{True}$ (respestively $x_t=\textrm{True}$), and $\textrm{False}$ otherwise. 
This valuation satisfies the formula: indeed let $s_1,s_2$ be such that $s_1$ is strictly to the left of $s_2$ in $P_i$. If $P_{s_2,t_2}$ uses the left endpoint of $P_i$, then all edges between $s_2$ and the left endpoint are oriented towards the left endpoint, and therefore $P_{s_1,t_1}$ also uses the left endpoint. Therefore $x_{s_2}$ implies $x_{s_1}$ and the path consistency clause  $x_{s_1}\lor \neg x_{s_2}$ is satisfied (the reasoning is identical for sink terminals).
Furthermore, take $(s,t)$ such that $s$ is in $P_i$ and $t$ is in $P_j$. If $P_{s,t}$ uses endpoint $a$ of $P_i$ and endpoint $b$ of $P_j$, then there is a directed path in $G$ between $a$ and $b$, so, by definition, the Path-Path Terminal Consistency clauses concerning $x_s$ and $x_t$ are satisfied.
Finally, take $(s,t)$ such that $s$ (respectively $t$) is in $P_i$ and $t$ (respectively $s$) is not in any non-trivial path. If $P_{s,t}$ uses endpoint $a$ of $P_i$, then there is a directed path in $G$ between $a$ and $t_i$ (respectively $s_i$ and $x$), so, by definition, the Path-Vertex Terminal Consistency clauses concerning $x_s$ or $x_t$ are satisfied. 

For the converse direction, we claim that if the resulting instance has a
satisfying assignment, we can find a valid orientation of $E$ to satisfy all
demands. For each source terminal $s$ (respectively sink terminal $t$) in $P_i$, if $x_s=\textrm{True}$ (respectively $x_t=\textrm{True}$), orient all edges between $s$ (respectively $t$) and the left endpoint of $P_i$ from $s$ to the left endpoint (respectively from the left endpoint to $t$). If $x_s=\textrm{False}$, (respectively $x_t=\textrm{False}$), do the same with the right endpoint. After this, all remaining edges can be oriented in any direction. The Path-Path Terminal Consistency clauses ensure that this orientation is well-defined. Indeed, recall that each path $P_i$ contains either only source or only sink terminals. Consider the case of $P_i$ containing source terminals (the other case is identical) and observe that we could only have a contradiction if the above procedure tried to orient an edge $e$ to the right and the left at the same time. This would imply there is a source $s_1$ to the left of $e$ such that $x_{s_1}=\textrm{False}$ and a source to the right of $e$ such that $x_{s_2}=\textrm{True}$. This would, however, have falsified the clause constructed for $s_1,s_2$, so this cannot happen for a satisfying assignment. 

Take $(s,t)$ such that $s$ is in $P_i$ and $t$ is in $P_j$, $x_s=\textrm{True}$ and $x_t=\textrm{True}$. In the defined orientation, there is an oriented path from $s$ to the left endpoint of $P_i$, and from the left endpoint of $P_j$ to $t$, and the Path-Path Terminal Consistency clauses ensure that there is a directed path between both left endpoints (otherwise we would have added a clause falsified by this assignment). The reasoning is the symmetrical for other valuations of $x_s$ and $x_t$ and when only one of $s,t$ is in a non-trivial path $P_i$.

This completes the reduction and since \textsc{2-SAT} can be solved in
polynomial time we obtain the lemma.  \end{proof}

\subsection{FPT algorithm by Number of Arcs}\label{sec:arcs}

\begin{proof}[Proof of \cref{thm:arcs}]

We begin by applying \cref{prop:prepoc} and \cref{prop:remove:degree:1}
exhaustively. This does not increase $|A|$, so we can assume that the
input is mixed acyclic and has no vertices of degree at most $1$. Our algorithm
is now the following: take all vertices of mixed degree $3$ or more, and for each possible orientation of their edges, create a new instance. We claim that we obtain at most $2^{6|A|}$ restricted instances, such that at least one of them will have a solution if and only if the original instance did. 

First, we argue that if we perform this branching on all vertices of mixed degree $3$ or more, we obtain an instance conforming to \cref{def:restricted}. Indeed, after the orientation, all vertices of mixed degree $3$ are not incident on any unoriented edges, only arcs.

In order to bound the number of instances, let us count the number of edges to orient. We will need the following lemma:

\begin{claimrep}
    Let $T=(V,E)$ be an undirected forest, $V_3\subseteq V$ be the set of vertices of degree at least $3$ in $T$ and $V_1\subseteq V$ the set of vertices of degree $1$. Then, 
    $\sum_{v\in V_3}d(v)\le 3|V_1|$.
\end{claimrep}
\begin{claimproof}

To ease notation, when $T$ is a forest we will write $s_3(T)$ to denote the sum of the degrees of all vertices of degree $3$ or more in $T$ and $s_1(T)$ to denote the number of leaves of $T$. So, the claim is that $s_3(T)\le 3s_1(T)$ for all forests $T$.

It is sufficient to prove the claim for trees and we proceed by induction. If $|V|=1$ or $2$, the result holds. Otherwise, take a leaf $v\in V$, sharing an edge with $u$. By induction,  $s_3(T\setminus\{v\})\le 3s_1(T\setminus\{v\})$. 

We now take cases on the degree of $u$ in $T\setminus\{v\}$, which has to be at least $1$ (otherwise $T$ would have been a $K_2$).
If the degree of $u$ is: 

\begin{itemize}
        \item $1$ in $T\setminus\{v\}$, $s_3(T)=s_3(T\setminus\{v\})$ and $s_1(T)=s_1(T\setminus\{v\})$, so the result holds.
        \item $2$ in $T\setminus\{v\}$, then $s_3(T)=s_3(T\setminus\{v\})+3$ and $s_1(T)=s_1(T\setminus\{v\})+1$, so the result holds
        \item $3$ or more in $T\setminus\{v\}$, then $s_3(T)=s_3(T\setminus\{v\})+1$ and $s_1(T)=s_1(T\setminus\{v\})+1$
    \end{itemize}  
    In all cases the result holds.
\end{claimproof}

Armed with this claim, let us continue: take $T\subseteq G[E]$ composed of the non-trivial connected components of $G[E]$. As $G$ is mixed acyclic, it is a forest. Furthermore, as $G$ does not contain any leaf, each leaf of $T$ has at least one arc. Let $V_0$ be the set of leaves of $T$ with a unique arc, $V_1$ be the set of leaves of $T$ with two arcs or more,  $V_2$ the set of vertices with two edges and at least one arc, and $V_3$ the set of vertices with at least $3$ edges. We need to orient the unoriented edges of $V_1,V_2$ and $V_3$, which make $|V_1|+2|V_2|+\sum_{v\in V_3}d(v)$ edges. But, as each arc has only two endpoints, $|V_0|+2|V_1|+|V_2|\leq 2|A|$. And, from the previous claim, $\sum_{v\in V_3}d(v)\le 3(|V_0|+|V_1|)$. Therefore, $|V_1|+2|V_2|+\sum_{v\in V_3}d(v)\leq 3|V_0|+4|V_1|+2|V_2|\le 3(|V_0|+2|V_1|+|V_2|)\le 6|A|$:, so we need to orient at most $6|A|$ edges, thus obtaining at most $2^{6|A|}$ restricted instances.
\end{proof}

\section{Kernelization}
In this section we show that any instance $(G,\terminals)$ of \SO\ admits a polynomial kernel with respect to parameter $\vc+k$, where $\vc$ is the size of a minimum vertex cover of $G$ and $k$ is the number of terminal pairs.

In particular we will prove that:

\begin{theoremrep} \label{thm:poly:kernel:vc:and:terminals}
\SO\ admits a kernel of order $\bO(\vc^2 + k)$, where $\vc$ is the size of the minimum vertex cover of $G$ and $k$ is the number of terminal pairs. 
\end{theoremrep}

The high-level idea is that in the final kernel we will keep (i) the vertices of the vertex cover (which are $\bO(\vc)$) (ii) all terminals (which are $\bO(k)$) (iii) $\bO(\vc^2)$ additional vertices from the independent set. Similarly to the XP algorithm for parameter vertex cover, we use the idea that once we have fixed all paths of length at most $2$ between vertices of the vertex cover, this is sufficient to fully determine connectivity in the graph. Therefore, at most $\bO(\vc^2)$ vertices from the independent set are needed to form the solution. The challenge is then to identify a set of that size which is always sufficient and we achieve this via a reduction to \textsc{Maximum Matching}.

\begin{proof}
We start with some notation we will use throughout this section, followed by a description of the algorithm, followed by a proof of its correctness.

\subsection{Notation and Preprocessing}\label{sec:kernel:prelim}

We assume that the given graph is acyclic by exhaustively applying \cref{prop:prepoc}. Also we can assume that we are given a vertex cover $S$ of $G$. Since a 2-approximation to the minimum vertex cover can be computed in polynomial time, this assumption is without loss of generality (and can only affect the constants in the order of the produced kernel). In the remainder, we will use $\vc$ to denote the size of $S$ (slightly abusing notation).
We also set $I = V \setminus S$ be the corresponding independent set.  

Define $I_T \subseteq I$ to be the set of vertices that appear in a terminal pair, and let $I_N = I \setminus I_T$ be the remaining vertices of $I$.

If there exists a directed $u$–$v$ path in $(V,A)$ for some pair $u, v \in S$, then we add the arc $(u, v)$ to $A(G)$.  
Note that adding $(u,v)$ is safe, as it does not affect the feasibility of the original instance.
From now on, we assume that $G$ has been modified to include these arcs.

Let $X \subseteq S \times S$ be the set of (ordered) vertex pairs $(u, v)$ with $u \ne v$, such that:
\begin{itemize}
    \item $(u, v) \notin A(G)$, $\{u, v\} \notin E(G)$ and
    \item there exists a path $u - w - v$ in $G$ with $w \in I_N$.
\end{itemize}
Note that $|X| < 4\vc^2$. 

Assume we are given an orientation $\lambda$. We will say that pair $(u,v) \in S \times S$ \textit{appears} in a path $P$ of $G_{\lambda}$ if there exists a vertex $w \in I_N$ such that $u - w - v$ is a subpath of $P$.  
Moreover, if $(u,v)$ appears in $P$, we say that $(u,v)$ \textit{uses} the vertex $w$ in $P$ if $u - w - v$ is a subpath of $P$.

\subsection{Construction of the Kernel}\label{sec:construction:poly:kernel:vc:plus:k}

Suppose we have a vertex cover $S$, as explained above, and let $\vc$ denote the size of $S$. Also let $I$, $I_T$, $I_N$ and $X$ be the sets defined in the \cref{sec:kernel:prelim}.
The high-level idea behind the kernel construction is that the vertices in $I_N$ 
will only appear as intermediate vertices in paths connecting terminals. 
Specifically, they will be used to connect vertices of the vertex cover.  
Note that the only pairs that need such a vertex are those in $X$. 
Since the number of such pairs is bounded by $4\vc^2$ we will show that it suffices to keep a (specific) set of $I'\subseteq I$ of size $\binom{2\vc}{2}$ and delete the rest. The graph $G[S\cup I_T\cup I']$ will be the kernel.
We proceed with the selection of $I'$.

Let $|X| = \ell$, and $(u_j, v_j)$, for $j \in [\ell]$, be an enumeration of the pairs in $X$.

We now construct an undirected auxiliary graph $H = (V', E')$ as follows:
\begin{itemize}
    \item For each pair $(u_j, v_j) \in X$, create a vertex $x_j$.
    \item For each vertex $w \in I_N$, create a vertex $w'$.
    \item We define $V'$ as $\{x_j \mid (u_j, v_j) \in X\} \cup \{w' \mid w \in I_N\}$.
    \item We add an edge $\{x_j, w'\}$ to $E'$ if one of the following holds:
    \begin{itemize}
        \item $\{u_j, w\}, \{v_j, w\} \in E(G)$,
        \item $\{u_j, w\} \in E(G)$ and $(w, v_j) \in A(G)$,
        \item $(u_j, w) \in A(G)$ and $\{v_j, w\} \in E(G)$.
    \end{itemize}
\end{itemize}

We compute a maximum matching $M \subseteq E'$ in $H$.  
We say that a pair $(u_j, v_j)$ is \textit{matched} to a vertex $w \in I_N$ if $\{x_j, w'\} \in M$.  
Now, we define $I'$ using $M$ as follows; for each such edge $\{x_j, w'\} \in M$, we include the corresponding vertex $w$ in the set $I'$.  

Note that $|I'| <4 \vc^2$, as we include at most one vertex per pair in $X$ and $|X| < 4\vc^2$.  


Having selected $I'$, we have completed the construction of the kernel.  
In particular, we define the kernel to be the induced subgraph $G[S \cup I_T \cup I']$ along with the terminal pairs $\terminals$. We will denote this subgraph by $G^*$.  

\subsection{Equivalence of the two instances}

We will show that $(G^*, \terminals)$ is a yes-instance of \SO\ if and only if $(G, \terminals)$ is a yes-instance of \SO. We start with the forward direction.

\begin{claim} \label{lemma:kernel:forward:direction}
    If $(G^*,\terminals)$ is a yes-instance of \SO\ then $(G,\terminals)$ is a yes-instance of \SO.
\end{claim}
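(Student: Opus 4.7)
The plan is to exploit the fact that $G^*$ is an induced subgraph of the (preprocessed) graph $G$, and that the vertices of $G$ not present in $G^*$, namely $I_N \setminus I'$, are all non-terminals. The forward direction therefore reduces to extending a solution for the smaller graph to the larger one by deciding arbitrarily what to do with the ``extra'' undirected edges.

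First, I would assume we are given an orientation $\lambda^*$ of the undirected edges of $G^*$ such that, for every $(s_i,t_i) \in \terminals$, there is a directed $s_i \to t_i$ path in $G^*_{\lambda^*}$. I would then define an orientation $\lambda$ of $G$ as follows: for every edge $e \in E(G^*)$, set $\lambda(e) = \lambda^*(e)$, and for every remaining undirected edge of $G$ (each of which has at least one endpoint in $I_N \setminus I'$) orient it arbitrarily. This is well-defined because $E(G^*) \subseteq E(G)$ and every undirected edge of $G$ falls into exactly one of these two categories.

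Next, I would verify that $\lambda$ satisfies every demand in $\terminals$. By construction of $G^*$, the set $V(G^*) = S \cup I_T \cup I'$ contains every terminal vertex, since any terminal in $I$ belongs to $I_T$ by definition. For any pair $(s_i,t_i) \in \terminals$, fix a directed $s_i \to t_i$ path $P_i$ in $G^*_{\lambda^*}$; such a path exists by hypothesis. Since $G^*$ is an induced subgraph of $G$, every arc of $G^*$ is an arc of $G$, and since $\lambda$ coincides with $\lambda^*$ on $E(G^*)$, the path $P_i$ is also a directed path in $G_\lambda$. Hence all demands are satisfied, and $(G,\terminals)$ is a yes-instance.

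I do not expect any substantive obstacle here: the forward direction is essentially the observation that adding vertices and undirected edges to a graph cannot destroy existing directed paths, as long as we are free to orient the new edges as we wish. The structural work introduced by the matching $M$, the set $X$, and the auxiliary graph $H$ plays no role in this direction; those ingredients will be essential for the converse claim, which must argue that any solution on $G$ can be mimicked by an orientation of the much smaller graph $G^*$, and that is where the maximum-matching argument between demand-relevant pairs in $X$ and witnesses in $I_N$ will do the real work.
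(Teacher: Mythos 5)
Your proof is correct and takes essentially the same approach as the paper: both observe that $G^*$ is an induced subgraph of $G$ containing all terminal vertices, so any satisfying orientation of $G^*$ extends to $G$ by orienting the remaining edges arbitrarily. You merely spell out the extension and the containment $V(G^*) \supseteq$ terminals in slightly more detail than the paper does.
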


\begin{claimproof}
Observe that $G^*$ is an induced subgraph of $G$. 
Therefore, if there exists an orientation $\lambda$ of the edges of $G^*$ such that, for every terminal pair $(s, t) \in \terminals$, there exists a directed $s$–$t$ path in $G^*_\lambda$, then this orientation can be extended to all edges of $G$ by orienting the remaining edges arbitrarily.  
Hence, if $(G^*, \terminals)$ is a yes-instance of \SO, then so is $(G, \terminals)$.    
\end{claimproof}

We continue with the converse direction.
\begin{claim} \label{lemma:kernel:converse:direction}
    If $(G,\terminals)$ is a yes-instance of \SO\ then $(G^*,\terminals)$ is a yes-instance of \SO.
\end{claim}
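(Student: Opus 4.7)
Given an orientation $\lambda$ of $G$ satisfying all terminal pairs, the plan is to construct an orientation $\lambda^\ast$ of $G^\ast$ that also satisfies them, by reusing $\lambda$'s orientation on the ``surviving'' part $S \cup I_T$ and by reorienting the edges incident to $I'$ using the matching $M$. The intuition is that whenever $\lambda$ satisfies a terminal pair via a length-two bridge $u \to w \to v$ with $w \in I_N \setminus I'$, we can reroute the path through a vertex $w^\ast \in I'$ supplied by $M$.

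First, I would fix a family of minimal paths $\mathcal{P}=\{P_{(s,t)}\}_{(s,t) \in \terminals}$ in $G_\lambda$, where minimality guarantees that whenever $P_{(s,t)}$ contains a subpath $u \to w \to v$ with $w \in I$, the pair $(u,v)$ admits no shortcut in $G_\lambda$; since $I$ is independent, both $u$ and $v$ lie in $S$. Let $X' \subseteq X$ collect all the pairs $(u,v)$ that appear this way via some $w \in I_N$. The central structural observation is that every such pair is servable by a vertex of $I'$: if the $\lambda$-server $w(u,v)$ already lies in $I'$, take it directly; otherwise $w(u,v)'$ is $M$-unmatched, so the maximality of $M$ combined with $\{x_{(u,v)}, w(u,v)'\} \in E'$ forces $x_{(u,v)}$ to be $M$-matched to some $w^\ast \in I'$, as otherwise $\{x_{(u,v)}, w(u,v)'\}$ would augment $M$.

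Next, I would construct $\lambda^\ast$ by orienting every edge of $G^\ast$ with both endpoints in $S \cup I_T$ as in $\lambda$, and by orienting the edges incident to each $w \in I'$ according to its matched pair: for each $\{x_{(u,v)}, w'\} \in M$, orient $\{u,w\}$ as $u \to w$ and $\{w,v\}$ as $w \to v$, with any remaining edges of $w$ oriented arbitrarily. Verification would be done path-by-path: for each terminal pair $(s,t) \in \terminals$, I take the path $P_{(s,t)}$ and substitute every subpath of the form $u \to w \to v$ with $w \in I_N \setminus I'$ by the corresponding $u \to M(x_{(u,v)}) \to v$, obtaining a valid $s$--$t$ walk in $G^\ast_{\lambda^\ast}$.

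The main obstacle is that the assignment of pairs in $X'$ to servers in $I'$ may introduce conflicting orientation constraints at a single $w \in I'$: for instance, a pair served by $w \in I'$ in $\lambda$ may be matched by $M$ to a different server, and the orientation forced by $M$ at $w$ may clash with $\lambda$'s orientation there. I would address this via a matching-rotation argument on the symmetric difference between $M$ and the service relation induced by $\lambda$: since $M$ is maximum, every alternating path in this symmetric difference necessarily terminates at a vertex in $I'$ that is unused by $\lambda$, so toggling along such paths yields a matching in $H$ saturating $X'$ whose $I_N$-endpoints all lie in $I'$. This refined matching provides a globally consistent assignment from $X'$ to $I'$ on which $\lambda^\ast$ can be defined without conflicts.
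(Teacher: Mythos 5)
Your overall plan---reuse $\lambda$ on the surviving graph, reroute length-two bridges through $M$-matched vertices in $I'$, and appeal to the maximality of $M$ via augmenting paths---matches the paper's approach, and you correctly identify the central obstacle: a pair $(u,v)$ served in $\lambda$ by some $w\in I'$ may clash with the orientation that $M$ forces at $w$. However, your proposed fix does not go through as stated. First, the ``service relation'' induced by $\lambda$ is not a matching: a single vertex $w\in I$ can serve several distinct pairs simultaneously (e.g., $G_\lambda$ may contain arcs $u_1\to w$, $u_2\to w$, $w\to v_1$, $w\to v_2$), so its symmetric difference with $M$ is not a disjoint union of alternating paths and cycles, and the standard rotation/augmenting-path structure you invoke is simply not there. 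Second, the target object you aim for---a matching $M'$ saturating all of $X'$ with $I_N$-endpoints in $I'$---is both stronger than needed and not guaranteed to exist: a pair whose $\lambda$-server lies in $I_T$ has no corresponding edge in $H$ at all, so it cannot be $M'$-saturated (yet it needs no rerouting, since $I_T$ survives in $G^*$), and two pairs may legitimately share a server under $\lambda$ without any conflict, which a matching cannot express.

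The missing idea is that one should \emph{partition} both the pairs and the surviving non-vertex-cover vertices: pairs in one part keep $\lambda$'s orientation at their servers (those servers form a set $I'_1\subseteq I'\cup I_T$), while pairs in the other part are rerouted via $M$ to the complementary set $I'_2$, and consistency is guaranteed precisely because no vertex is in both parts. The paper constructs this partition by an iterative closure: start from the pairs in $X_P$ unmatched by $M$ (these must already be served inside $I'\cup I_T$, else $M$ would admit an augmenting edge), then repeatedly absorb pairs matched to previously-absorbed servers, showing at each step that a hypothetical offending server in $I_N\setminus I'$ yields an explicit $M$-augmenting path. This iterative closure replaces, rather than refines, the single symmetric-difference toggle you propose. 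You also omit the preliminary normalization step (the paper's Claim~\ref{claim:pairs:using:same:vertex}) ensuring that each pair uses the \emph{same} server in all of its appearances across all the shortest paths; without it, ``the $\lambda$-server $w(u,v)$'' is not well-defined.
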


Because this direction is rather technical, we give a sketch of the proof.

\paragraph*{Sketch of converse direction}

Let $\lambda$ an orientation of $E(G)$ such that, for every $(s,t) \in \terminals$, the graph $G_\lambda$ contains a directed $s$–$t$ path. 

For each $(s_i,t_i)\in \terminals$ let $P_i$ be a minimum $s_i$-$t_i$ path in $G_\lambda$. 
We show that, we can modify these paths so, any pair $(u,v) \in S\times S$ that appears in any of the paths uses the same vertex from $I$.

Let $X_P\subseteq X$ be pairs that appear in these paths. We proceed by proving that there exists a partition of $X_P$ and $I'\cup I_T$ into $X_1$, $X_2$ and $I'_1$ and $I'_2$, respectively, such that:
\begin{itemize}
    \item Each pair in $X_1$ uses a vertex from $I'_1$ in some path $P_i$, for $i \in [k]$;
    \item Each pair in $X_2$ is matched to a vertex in $I'_2$ by the matching $M$.
\end{itemize}

Assuming that $I'_1$ and $I'_2$ are given, we define an orientation $\mu$ of the edges in $E(G^*)$ as follows:
For each vertex $w \in I'_2$, we orient the edges incident to $w$ so that $(u_j, w) \in A(G^*_\mu)$ and $(w, v_j) \in A(G^*_\mu)$, according to the pair $(u_j, v_j)$ matched to $w$ by $M$.  
If the orientation of an edge incident to $w$ is not determined by the matching, we choose it arbitrarily.  
Finally, all remaining edges are oriented in $\mu$ as in $\lambda$.

Note that, given the above partitioning and the definition of $\mu$, the following holds:
\begin{itemize}
    \item For any pair $(u, v)$ appearing in a path $P_i$ with a subpath $v - w - u$:
    \begin{itemize}
        \item If $(u, v) \in X_1$, then $w \in I'_1$, and the subpath $v - w - u$ is directed in $G^*_\mu$ because $\mu$ and $\lambda$ agree on the orientation of edges incident to $I'_1$;
        \item If $(u, v) \in X_2$, then $(u, v)$ is matched to some $w' \in I'_2$, and this $w'$ can be used to replace $w$ in a directed $u$-$v$ path in $G^*_\mu$ by construction of $\mu$.
    \end{itemize}
\end{itemize}

This implies that the paths $P_i$ can be modified accordingly to form $s$-$t$ paths in $G^*_\mu$.
We now provide the detailed proof of the lemma.

\begin{claimproof}[Proof of \Cref{lemma:kernel:converse:direction}]
    we need to prove that given an orientation $\lambda$ of $E(G)$ such that, for every $(s,t) \in \terminals$, the graph $G_\lambda$ contains a directed $s$–$t$ path, we can compute an orientation $\mu$ of $E(G^*)$ such that, for every $(s,t) \in \terminals$, the graph $G^*_\mu$ contains a directed $s$–$t$ path.

Assume we are given such an orientation $\lambda$.  
For each $i \in [k]$, let $P_i$ be a shortest directed path from $s_i$ to $t_i$ in $G_\lambda$.

We say that a pair $(u,v) \in S \times S$ \textit{appears} in a path $P$ if there exists a vertex $w \in I_N$ such that the subpath $u - w - v$ is contained in $P$.  
Moreover, if $(u,v)$ appears in $P$, we say that $(u,v)$ \textit{uses} the vertex $w$ in $P$ if $u - w - v$ is a subpath of $P$.

Note that if a pair $(u,v)$ appears in any of the paths $P_i$, for some $i \in [k]$, then $(u,v) \in X$.  
Otherwise, either the arc $(u,v)$ would belong to $A(G^*)$ or the edge $\{u,v\}$ would belong to $E(G^*)$, contradicting the assumption that $P_i$ is a shortest path between $s_i$ and $t_i$.

The next step in proving the equivalence of the two instances is to establish the following claim.

\begin{claim} \label{claim:pairs:using:same:vertex}
    If a pair $(u,v) \in S \times S$ appears in any of the paths $P_i$, for $i \in [k]$, then $(u,v)$ can use the same vertex in all its appearances, in any path $P_i$, $i \in [k]$.
\end{claim}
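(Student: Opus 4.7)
The plan is to use a simple exchange argument, leveraging the key structural fact that the orientation $\lambda$ produces a directed acyclic graph. To see this, observe that after applying \cref{prop:prepoc} we may assume $G$ is mixed acyclic: if $G_\lambda$ contained a directed cycle $v_1 \to v_2 \to \cdots \to v_p \to v_1$, then each step $v_i \to v_{i+1}$ would be either an original arc in $A$ or the $\lambda$-orientation of an undirected edge in $E$, and in either case it would constitute a legal step of a mixed cycle in $G$, contradicting mixed acyclicity. Hence $G_\lambda$ is a DAG and, in particular, every directed walk in $G_\lambda$ is automatically a simple directed path.

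Given this, I would fix a pair $(u,v) \in X$ and pick an arbitrary path $P_{i_0}$ in which it appears, say via some vertex $w_0 \in I_N$. By definition of ``uses'', the arcs $u \to w_0$ and $w_0 \to v$ are both present in $G_\lambda$. For any other path $P_j$ in which the pair $(u,v)$ appears through a different vertex $w' \in I_N$, I would replace the subpath $u \to w' \to v$ of $P_j$ by $u \to w_0 \to v$. The result is a directed walk from $s_j$ to $t_j$ of the same length as $P_j$; by the DAG property above, this walk is necessarily simple, hence a valid directed $s_j$-$t_j$ path that we may take as the new $P_j$. Since its length is unchanged, $P_j$ remains a shortest $s_j$-$t_j$ path, as required.

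I would then iterate this substitution over every pair appearing in some $P_i$. A single substitution for $(u,v)$ is entirely local: it alters only the intermediate vertex between $u$ and $v$ in a single path and does not modify $\lambda$, so the appearances (and the intermediate vertices used) of every other pair in every other path are unaffected. The process therefore commutes and terminates with every pair using one common $I_N$-vertex across all of its appearances. The main obstacle I expect is precisely the DAG observation: without it, substituting $w_0$ into a path that already contains $w_0$ elsewhere could produce a non-simple walk, forcing a much more delicate rerouting argument; once the DAG property is in place, the exchange step goes through with no further bookkeeping.
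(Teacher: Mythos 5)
Your proof is correct and follows essentially the same exchange argument as the paper: fix one canonical intermediate vertex per pair, substitute it everywhere that pair appears, and iterate pair by pair. The only real difference is how you justify that the modified walk is still a simple path: the paper argues by contradiction that a repeated vertex would let one shortcut $P_i$ to a strictly shorter $s_i$-$t_i$ path, violating the choice of $P_i$ as shortest, whereas you observe up front that $G_\lambda$ is acyclic (since any directed cycle in $G_\lambda$ would be a mixed cycle in $G$) and hence every directed walk is automatically simple. Your DAG observation is cleaner and makes the simplicity of the modified walk immediate, while the paper's argument works without it; both still need the shortest-path assumption downstream (e.g., to conclude every appearing pair lies in $X$, and that the modified $P_i$ remain shortest), which you correctly preserve by noting the length is unchanged.
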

    
\begin{claimproof}
    We proceed by fixing a pair $(u,v) \in S \times S$ that appears in some of the paths $P_i$, for $i \in [k]$.  
    We will modify the paths $P_i$, for $i \in [k]$, so that $(u,v)$ uses the same vertex in every path in which it appears.
    
    Consider a path $P \in \{P_i \mid i \in [k]\}$ such that $(u,v)$ appears in $P$, and let $w$ be the vertex that is used by $(u,v)$ in $P$.  
    We then identify every appearance of $(u,v)$ in all paths $P_i$, for $i \in [k]$, and replace the vertex it uses with $w$.
    
    After this modification, each $P_i$ remains a walk in $G_\lambda$.  
    We now argue that all modified $P_i$ are still paths. 
    Assume, that a $P_i$, for some $i \in [k]$, is no longer a path.  
    This implies that $w$ appears more than once in $P_i$.  
    Therefore, we can obtain a path $Q$ that is strictly shorter than $P_i$ and still connects $s_i$ to $t_i$.  
    This contradicts the assumption that the original $P_i$ was a shortest path.
    
    Finally, we repeat this process until every pair $(u,v)$ that appears in the paths $P_i$, for $i \in [k]$, uses the same vertex in all its appearances.
\end{claimproof}

Hereafter, we assume that the paths $P_i$, for $i \in [k]$, have been selected so that any pair that appears in any of the paths always uses the same vertex.  
Let $X_P$ denote the subset of pairs in $X$ that appear in at least one of the paths $P_i$, for $i \in [k]$.

The next step is to show how the vertices of $I'$ can be used to replace those vertices that are used by the pairs but do not belong to $I'$.

To do so, we first provide a partition of $I'\cup I_T$ and $X_P$ into two subsets each, $I'_1$, $I'_2$ and $X_1$, $X_2$, respectively, such that:
\begin{itemize}
    \item Each pair in $X_1$ uses a vertex from $I'_1$ in some path $P_i$, for $i \in [k]$;
    \item Each pair in $X_2$ is matched to a vertex in $I'_2$ by the matching $M$.
\end{itemize}

\proofsubparagraph{Partition of $I'\cup I_T$ and $X_P$:}

We begin by defining the set $X^1 \subseteq X_P$ such that a pair $(u_j, v_j)\in X^1$, $j\in [\ell]$, if and only if there is no edge $\{x_j, w'\} \in M$, where $M$ is the maximum matching of the auxiliary graph $H$. Observe that any pair $(u_j,v_j)\in X^1$ uses a vertex $w\in I'\cup I_T$. Indeed, if $(u_j,v_j)$ used a vertex $w\in I_N\setminus I'$, then $M\cup\{\{x_j,w'\}\}$ would be a matching in $H$ of size greater than $|M|$, which is a contradiction.

Let $I^1\subseteq I$ be the set of vertices used by the pairs $(u,v)\in X^1$. As we already proved, $I^1\subseteq I'\cup I_T$. Then let $X^2\subseteq X_P$ be the set of pairs matched to vertices in $I^1$ by the matching $M$ in $H$. Note that, since $I^1\cap I_T$ may be nonempty, not all vertices in $I^1$ are necessarily matched (by $M$) to pairs in $X_P$, but this does not affect the construction.
We will use $I^1$, $X^1$ and $X^2$ for an inductive construction. In particular, we will construct a sequence $(I^q,X^{q+1})$ such that, for any $q\ge 1$:
It holds that $I^q \subseteq I'\cup I_T$ and, for any pair $(u_j, v_j) \in X^{q+1} \setminus X^q$:
\begin{itemize}
    \item there exists a pair $(u_{j'}, v_{j'}) \in X^1$ such that $x_j$ and $x_{j'}$ are connected in $H$ via a path $Q$. Also, $Q$ alternates between edges in $M$ and $E(H) \setminus M$, and,
    \item the pair $(u_j, v_j)$ uses a vertex $w \in I' \cup I_T$ in every path $P_i$ it appears in.
\end{itemize}

First, we need to prove the properties for our base.

\textbf{Base Case} ($q=1$)\textbf{:}
Note that we have already proved that $I^1\subseteq I'\cup I_T$.
Consider any pair $(u_j,v_j) \in X^2 \setminus X^1$. By construction of $X^2$, $(u_j,v_j)$ has been matched by $M$ with a vertex $w\in I_1$. Since $w\in I_1$, there exists a pair $(u_{j'}, v_{j'}) \in X^1$ such that $(u_{j'}, v_{j'})$ uses $w$ in a path. Therefore, we can conclude that the edge ${x_{j'},w'}\in E(H)$. Since at least one of ${x_{j'},w'}$, ${x_{j},w'}$ is not included in $M$, we have a path $Q$ in $H$ that connects $x_j$ and $x_{j'}$.
Also, $Q$ alternates between edges in $M$ and $E(H) \setminus M$.

Now we will show that $(u_j,v_j)$ uses a vertex in $I'\cup I_T$ in all its appearances. Suppose, for contradiction, that the pair $(u_j, v_j)$ uses a vertex $v \in I_N \setminus I'$. Then we will use $Q$ to construct a matching in $H$ larger than $M$. Since $Q$ alternates between edges in $M$ and edges in $E(H) \setminus M$. Indeed, $M \setminus Q \cup (Q \setminus M) \cup { {x_j, v'} }$ forms a matching larger than $M$ in $H$. This contradicts the maximality of $M$, thus $v\in I'\cup I_T$.

\textbf{Inductive Hypothesis:}
Since we have proved that the invariants hold for $q=1$, we can assume that we have already constructed the sets $I^q$ and $X^{q+1}$ for some $q \geq 1$, and that $I^q$, $X^{q+1}$ satisfy the invariants. Now, we proceed with the inductive step.

\textbf{Inductive Step:}
Given the sets $I^q$ and $X^{q+1}$,
we construct the sets $I^{q+1}$ and $X^{q+2}$ as follows. 

Let $I^{q+1}$ be the set we obtain by adding to $I^q$ all vertices used by the pairs $(u,v)\in X^{q+1} \setminus X^q$. 
Notice that, by assumption $I^q\subseteq I'\cup I_T$ and the pairs in $X^{q+1}\setminus X^q$ use vertices in $I'\cup I_T$. Therefore $I^{q+1} \subseteq I' \cup I_T$. Then we define $X^{q+2}$ as the union of $X^{q+1}$ and all pairs $(u,v)\in X_P$ that are matched to vertices in $I^{q+1} \setminus I^q$.
It remains to show that $(I^{q+1},X^{q+2})$ satisfy the invariants. Note that $I^{q+1} \subseteq I' \cup I_T$ holds by the construction.

Now, consider any pair $(u_j, v_j) \in X^{q+2} \setminus X^{q+1}$. We will show that there exists a pair $(u_{j'}, v_{j'}) \in X^1$ such that there exists a path $Q$ that connects $x_j$ and $x_{j'}$ in $H$. Additionally, $Q$ alternates between edges of $M$ and $E(H)\setminus M$.

Since $(u_j, v_j) \in X^{q+2} \setminus X^{q+1}$, it is matched to some vertex $v \in I^{q+1}$, and there exists a pair $(u_{j''}, v_{j''}) \in X^{q+1} \setminus X^q$ such that:
\begin{itemize}
\item $(x_j, v') \in M$ (because $(u_j, v_j)$ is in $X^{q+2} \setminus X^{q+1}$), and
\item $(x_{j''}, v') \in E(H)$ (since the pair $(u_{j''}, v_{j''})$ must use $v$ and hence $v \in I^{q+1}$).
\end{itemize}
By the inductive assumption, there exists a path $Q$ in $H$ between $x_{j''}$ and $x_{j'}$, for some $(u_{j'}, v_{j'}) \in X^1$, alternating between edges in $M$ and $E(H) \setminus M$. Since $H$ is bipartite and $x_{j''}$ is an endpoint of $Q$, the edge incident to $x_{j''}$ in $Q$ belongs to $M$. Therefore, we can extend $Q$ to a path from $x_j$ to $x_{j'}$ by adding the edges $\{x_j, v'\} \in M$ and $\{x_{j''}, v'\} \in E(H)$, preserving the alternating property.

Next, we show that every pair $(u_j, v_j) \in X^{q+2} \setminus X^{q+1}$ uses a vertex $w \in I' \cup I_T$ in all paths $P_i$ it appears in. Suppose this is false and $w \in I_N \setminus I'$. Then, using the constructed alternating path $Q$ from the previous argument, the set $(M \setminus Q) \cup (Q \setminus M) \cup \{\{x_j, w'\}\}$ would be a matching in $H$ of size $|M| + 1$, again, a contradiction. This completes the induction step.

We terminate the construction when $I^{q+1} \setminus I^q = \emptyset$. Since $I^q \subseteq I^{q+1} \subseteq I'\cup I_T$, for every $q\ge 1$ this process will eventually terminate.

We then define $I'_1 = I^{q+1} \cup I_N$ and $I'_2 = I' \setminus I'_1$. Finally, we define a partition of $X_P$ as follows: let $X_{I_T}$ be the set of pairs $(u,v) \in X_P$ such that the vertex used by $(u,v)$ lies in $I_T$. Then, set $X_1 = X_{I_T} \cup X^{q+2}$ and $X_2 = X_P \setminus X_1$.

\proofsubparagraph{Properties of $X_2$ and construction of the new paths:}

Before proceeding, observe that any pair $(u,v) \in X_2$ has been matched by $M$ to a vertex $w \in I'_2$.  
Indeed, if $(u,v)$ had not been matched to a vertex in $I'_2$, then either it was not matched at all, or it was matched to a vertex in $I'_1$.  
In the first case, we would have $(u,v) \in X^1 \subseteq X_1$, contradicting $(u,v) \in X_2$.  
In the second case, the matched vertex belongs to $I^{q+1}$, implying $(u,v) \in X^{q+2} \subseteq X_1$, again a contradiction.

Next, we recall the definition of $\mu$ we gave in the sketch: For every vertex $w \in I'_1$, the orientation $\mu$ agrees with $\lambda$ on the edges incident to $w$.  
Then, for each vertex $w \in I'_2$ matched by $M$ to some pair $(u,v)$, we set $\mu$ to orient the edges incident to $w$ so that $(u,w)$ and $(w,v)$ are arcs in $A(G^*_\mu)$.  
This orientation is feasible due to the construction of the graph $H$ and the properties of the matching $M$.

We now claim that for all $i \in [k]$, the graph $G^*_\mu$ contains a directed $s_i$-$t_i$ path.  
To prove this, we modify the paths $P_i$ accordingly.  
Our goal is to ensure that every pair $(u,v) \in X$ appearing in the paths $P_i$ uses only vertices from $I'$, and moreover, that $\mu$ orients the relevant edges to allow such usage in $G^*_\mu$.

Fix a pair $(u,v)$ that appears in some $P_i$.  
If $(u,v) \notin X_2$, then it uses a vertex $w \in I'_1 \subseteq V(G^*_\mu)$.  
Since $\mu$ agrees with $\lambda$ on edges incident to these vertices, we conclude that the arcs $(u,w)$ and $(w,v)$ are present in $A(G^*_\mu)$.

Now consider the case where $(u,v) \in X_2$, and let $w$ be the vertex it uses in the paths.  
Since $(u,v)$ has been matched to a vertex $z \in I'_2$ by $M$, and $\mu$ orients the edges so that $(u,z)$ and $(z,v)$ are arcs in $G^*_\mu$, we can modify every path where $(u,v)$ uses $w$ by replacing $w$ with $z$.  
We apply this modification for all such pairs in $X_2$.

After these modifications, each $P_i$ becomes a walk in $G^*_\mu$ (since vertices may be repeated), but this suffices to guarantee the existence of $s_i$-$t_i$ paths for all $i \in [k]$.

This shows that if $(G, \terminals)$ is a yes-instance of \SO, then so is $(G^*, \terminals)$.
\end{claimproof}

Note that, the kernel presented in \cref{sec:construction:poly:kernel:vc:plus:k} together with \cref{{lemma:kernel:forward:direction}} and \cref{{lemma:kernel:converse:direction}} directly imply the correctness of \cref{thm:poly:kernel:vc:and:terminals} 

\end{proof}

\section{FPT by Treewidth plus Number of Terminals}

Our goal in this section is to show that even though \SO\ is NP-hard for graphs
of constant treewidth and also W[1]-complete parameterized by the number of
terminals $k$, it is in fact FPT when parameterized by the two parameters
together. In order to ease presentation we will avoid giving an argument based
on a dynamic programming algorithm and will prefer to instead rely on
Courcelle's theorem \cite{Courcelle90}. This will of course have the
disadvantage that we obtain an algorithm with running time of the form
$f(\tw,k)n^{\bO(1)}$, where the function $f$ is not explicitly bounded (and
could be a tower of exponentials). This is sufficient for our purpose of
classifying the complexity of this case as FPT, but we leave it as an
interesting open problem to determine the best parameter dependence $f$ one can
achieve.  We in fact establish the following slightly stronger claim:

\begin{theoremrep}\label{thm:courcelle} \SO\ is FPT parameterized by the treewidth
of the augmented underlying graph, that is, the treewidth of the graph obtained
by taking the underlying graph of the input and adding for each terminal pair
$(s_i,t_i)$ the edge $s_it_i$ to the graph. \end{theoremrep}

We observe that \cref{thm:courcelle} implies that \SO\ is FPT parameterized by
treewidth plus $k$ (the number of terminals), because adding the extra edges to
the underlying graph can increase its treewidth by at most $k$.

\begin{proof}[Proof of \cref{thm:courcelle}]

We observe that \SO\ can be expressed as an MSO$_2$ formula and then invoke
Courcelle's theorem \cite{Courcelle90}. Before we do this, it will be helpful
to recall the basics of what can be expressed in MSO$_2$ logic, which is the
logic of properties which can be expressed using quantification over sets of
vertices or edges of an input graph. It is important to note here that
Courcelle's theorem is able to handle directed (and indeed mixed)
graphs\footnote{Indeed, Courcelle's theorem holds also for directed
\emph{hypergraphs}, where hyperedges are ordered tuples of their endpoints
(cf.\ \cite{Courcelle90} p.\ 25), but in our case we only consider (normal)
edges of arity $2$.}.  More precisely, we can assume that we are given in the
input a graph $G=(V,E)$ and we are supplied with (i) a predicate
$\textrm{Edge}(e,x,y)$ which is satisfied if the edge $e$ is an edge from $x$
to $y$ (ii) unary predicates on the edges (or vertices) which allow us to
distinguish between a finite set of edge types.

We are now given an input to \SO\ consisting of a mixed graph $G=(V,E,A)$ and a
set of terminal pairs $\terminals = \setdef{(s_i,t_i) \in V \times V}{i \in
[k]}$. We can assume without loss of generality that $G[A]$ is acyclic (as
otherwise we can contract the directed cycle and simplify the instance). In
particular, $A$ does not contain any anti-parallel arcs.  We add to the graph
for each $i\in[k]$ an arc from $s_i$ to $t_i$. Furthermore, we replace each
$e=xy\in E$ with two arcs connecting $x,y$ in both directions.

We obtain a directed graph where edges have two types: type $1$ arcs are
original arcs ($A$) and arcs added to simulate edges of $E$; type $2$ arcs are
those added in the augmented graph.  Suppose that we have unary predicates
$T_1,T_2$ such that the meaning of $T_\alpha(e)$ is that the edge $e$ is of
type $\alpha\in\{1,2\}$. 

We can now construct an MSO$_2$ formula stating that the given \SO\ instance
has a solution. To do this we need to express that there exists a set of edges
$S$ of the input that satisfies the following:

\begin{enumerate}

\item All edges of $S$ are of type $1$.

\item For all $x,y\in V$, $S$ contains at most one arc with endpoints $x,y$
(that is, $S$ contains no anti-parallel arcs).

\item For all arcs of type $2$ with endpoints $s,t$, $S$ contains a directed
path from $s$ to $t$.

\end{enumerate}

The first two constraints are easy to express in MSO$_2$ (indeed, even in FO)
logic. The last (connectivity) constraint is well-known to be expressible in
MSO$_1$ logic as it is equivalent to the following property: for all sets of
vertices $X$, if $s\in X$ and $t\not\in X$, then there must exist an edge of
$S$ going from a vertex of $X$ to a vertex outside of $X$.   In particular, we
obtain the following formulas:

\begin{enumerate}

\item $\forall e\ \left(e\in S \to T_1(e) \right)$

\item $\neg \left( \exists e_1,e_2,x,y\ \left(e_1\in S\land e_2\in S\land
\textrm{Edge}(e_1,x,y) \land \textrm{Edge}(e_2,y,x) \right) \right)$

\item $\forall e,s,t\ \left(T_2(e)\land \textrm{Edge}(e,s,t)\right)\to $\\
$\neg \left(\exists X\ \left(s\in X\land t\not\in X\land \neg\left(\exists
e,a,b\ \left(e\in S\land a\in X\land b\not\in X\land
\textrm{Edge}(e,a,b)\right)\right)\right) \right)$

\end{enumerate}

It is now not hard to see that the formulas above indeed encode the desired
property, in particular because $A$ contained no anti-parallel arcs (hence the
second constraint only affects arcs added for undirected edges and ensures that
at most one direction is retained in $S$).  Observe that the MSO$_2$ formula we
have constructed has constant size.  \end{proof}

\section{Conclusion}\label{sec:conclusion}
Even though we have characterized the complexity of \SO\ for all possible values of treewidth (the only polynomial case is when the input is a forest), our NP-completeness only applies for graphs of pathwidth $3$: is the problem in P or NP-complete for graphs of pathwidth $2$. A similar question can be asked for feedback vertex set, where we show hardness when the feedback vertex set has size at least $2$: does the problem become tractable on graphs which are one vertex away from being forests? On the parameterized complexity side, even though we obtain a $2^{\bO(|A|)}n^{\bO(1)}$ algorithm, the base in the exponential is large ($64$), so improving this is a natural question.

\bibliography{bibliography}

\newpage

\appendix

\end{document}